\documentclass[preprint,12pt]{elsarticle}

\pdfoutput=1
\usepackage{graphicx,epsfig}
\usepackage{array}
\usepackage{amsmath}
\usepackage{amsfonts}
\usepackage{amssymb}
\usepackage{float}
\biboptions{sort&compress}

\newtheorem{theorem}{Theorem}[section]
\newtheorem{lemma}[theorem]{Lemma}
\newtheorem{definition}[theorem]{Definition}

\newtheorem{remark}[theorem]{Remark}

\newtheorem{assumption}[theorem]{Assumption}
\newproof{proof}{Proof}
%
%

%
%
%
\newcommand{\sr}{\stackrel}

\newcommand{\rar}{\rightarrow}

\newcommand{\tri}{\sr{\triangle}{=}}

%
\newcommand{\be}{\begin{equation}}
\newcommand{\ee}{\end{equation}}
\newcommand{\bea}{\begin{eqnarray}}
\newcommand{\eea}{\end{eqnarray}}
\newcommand{\bes}{\begin{eqnarray*}}
\newcommand{\ees}{\end{eqnarray*}}
%
%
\newcommand{\bp}{\begin{problem}}
\newcommand{\ep}{\end{problem}}

\newcommand{\noi}{\noindent}

\newcommand{\bc}{\begin{center}}
\newcommand{\ec}{\end{center}}

\journal{Automatica}

\begin{document}

\begin{frontmatter}


 \title{Nonanticipative Rate Distortion Function and Filtering Theory:~A Weak Convergence Approach\tnoteref{label1}}
\tnotetext[label1]{This work was financially supported by a medium size University of Cyprus grant entitled “DIMITRIS” and by European Community's Seventh Framework Programme (FP7/2007-2013) under grant agreement no. INFSO-ICT-223844. Part of this work was presented in 12$^{th}$ Biannual European Control Conference (ECC '13) \cite{charalambous-stavrou2013bb}.}
\author[rvt]{P.~A.~Stavrou}
\ead{stavrou.fotios@ucy.ac.cy}
\ead[url]{http://www.photiosfstavrou.com}
\author[rvt]{C.~D.~Charalambous\corref{cor1}}
\ead{chadcha@ucy.ac.cy}
\ead[url]{http://www.eng.ucy.ac.cy/chadcha/}
\cortext[cor1]{Corresponding author at: Electrical and Computer Engineering Department, University of Cyprus, Nicosia, Cyprus}
\address[rvt]{The authors are with the Department of Electrical and Computer Engineering (ECE), University of Cyprus, Nicosia, Cyprus}

%

\begin{abstract}
In this paper the relation between nonanticipative rate distortion function (RDF) and Bayesian filtering theory is further investigated on general Polish spaces. The relation is established via an optimization on the space of conditional distributions of the so-called directed information subject to fidelity constraints. Existence of the optimal reproduction distribution of the nonanticipative RDF is shown using the topology of weak convergence of probability measures. Subsequently, we use the solution of the nonanticipative RDF to present the realization of a multidimensional partially observable source over a scalar Gaussian channel. We show that linear encoders are optimal, establishing joint source-channel coding in real-time. 
\end{abstract}

\begin{keyword}
Nonanticipative rate distortion function \sep realizability \sep weak convergence \sep filtering theory \sep optimal reproduction conditional distribution.
\end{keyword}

\end{frontmatter}


\section{Introduction}\label{introduction}
\par  In the past, rate distortion (or distortion rate) functions and filtering theory have evolved independently. Specifically, classical rate distortion function (RDF) addresses the problem of reproduction of a process subject to a fidelity criterion without much emphasis on the realization of the reproduction conditional distribution via causal\footnote{The terms causal and nonanticipative are used interchangeably with the same meaning for conditional distributions.} operations. On the other hand, filtering theory is developed by imposing real-time realizability on estimators with respect to measurement data. Specifically, least-squares filtering theory deals with the characterization of the conditional distribution of the unobserved process given the measurement data, via a stochastic differential equation which causally depends on the observation data \cite{elliott-aggoun-moore1995}.\\
  Although, both reliable communication and filtering (state estimation for control) are concerned with the reproduction of processes, the main underlying assumptions characterizing them are different.\\ 
\noi Historically, the work of R. Bucy  \cite{bucy} appears to be the first to consider the direct relation between distortion rate function and filtering, by carrying out the computation of a realizable distortion rate function with square criteria for two samples of the Ornstein-Uhlenbeck process. The work of A. K. Gorbunov and M. S. Pinsker \cite{gorbunov91} on $\epsilon$-entropy defined via a causal constraint on the reproduction distribution of the RDF, although not directly related to the realizability question pursued by  Bucy, computes the nonanticipative RDF for stationary Gaussian processes via power spectral densities. Recently, the authors in \cite{charalambous-stavrou-ahmed2013} investigated relations between filtering theory and RDF defined via mutual information using the topology of weak$^*$ convergence on appropriate defined spaces. The derivations of the results in \cite{charalambous-stavrou-ahmed2013} require elaborate arguments.\\
\noi The objective of this paper is to further investigate the connection between nonanticipative rate distortion theory and filtering theory for general distortion functions and random processes on abstract Polish spaces using the topology of weak convergence. Moreover, instead of mutual information we invoke directed information with an inherent causality, which defines the reproduction conditional distribution. Further, the connection is established  via optimization of directed information \cite{massey90} over the space of conditional distributions which satisfy an average distortion constraint. In comparison to \cite{charalambous-stavrou-ahmed2013}, we impose natural technical assumptions to obtain analogous results under the topology of weak convergence of probability measures, by using Prohorov's theorem without introducing new spaces as done in \cite{charalambous-stavrou-ahmed2013}. We also present a new example to illustrate the realization of the filter via nonanticipative RDF. Specifically, we consider a multidimensional partially observable source, we compute the nonanticipative RDF, and we show how to realize it over a scalar additive Gaussian noise channel showing that linear encoder strategies are optimal. This example is new and it is considered as an open problem in information theory \cite{derpich-ostergaard2012}.\\
The main results discussed in this paper are the following.
\begin{description}
\item[(1)] Existence of optimal reproduction distribution minimizing directed information using the topology of weak convergence of probability measures on Polish spaces;
\item[(2)] example of a multidimensional source which is realized over a scalar additive Gaussian noise channel, for which the filter is obtained.
\end{description}
\noi This work is motivated by recent applications of sensor networks in which estimators are desired to have a specific accuracy, when processing information \cite{,gupta-dana-hespanha-murray-hassibi2009,yuksel2011}, and control over limited rate communication channel applications \cite{tatikonda-mitter2004,freudenberg-middleton2008,yuksel-meyn2013}. It is important to note that over the years several papers have appeared in the literature utilizing information theoretic measures for estimator and control applications \cite{feng-loparo-fang1997,guo-yin-wang-chai2009}.\\
First, we give a brief high level discussion on the relation between nonanticipative RDF and filtering theory, and discuss their connection.\\
Consider a discrete-time process $X^n\tri\{X_0,X_1,\ldots,X_n\}\in{\cal X}_{0,n} \tri \times_{i=0}^n{\cal X}_i$, and its reproduction $Y^n\tri\{Y_0,Y_1,\ldots,Y_n\}\in{\cal Y}_{0,n} \tri \times_{i=0}^n{\cal Y}_i$ where ${\cal X}_i$ and ${\cal Y}_i$ are Polish spaces.
\\

\noi{\it Bayesian Estimation Theory.} In classical filtering (see Fig.~\ref{filtering}), one is given a mathematical model that generates the process $X^n$, $\{P_{X_i|X^{i-1}}(dx_i|x^{i-1}):i=0,1,\ldots,n\}$, often induced via discrete-time recursive dynamics, a mathematical model that generates observed data obtained from sensors, say $Z^n$, $\{P_{Z_i|Z^{i-1},X^i}$ $(dz_i|z^{i-1},x^i):i=0,1,\ldots,n\}$, and the objective is to compute causal estimates of some function of the process $X^n$ based on the observed data $Z^n$.
The classical Kalman Filter is a well-known example, where the estimate $\widehat{X}_i =\mathbb{E}[X_i | Z^{i-1}],~i=0,1,\ldots,n$, is the conditional mean which minimizes the average least-squares estimation error. 	
\begin{figure}[ht]
\centering
\includegraphics[scale=0.75]{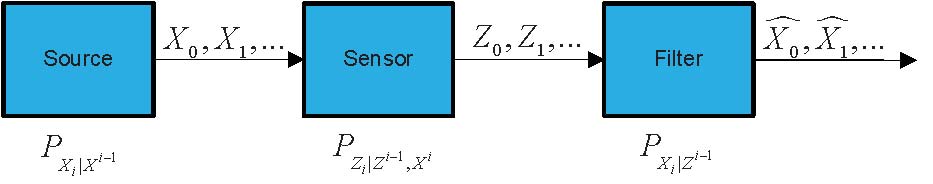}
\caption{Block Diagram of the Filtering Problem}
\label{filtering}
\end{figure}

\noi{\it Nonanticipative Rate Distortion Theory and Estimation.} In nonanticipative rate distortion theory one is given a process $X^n$, which induces a distribution $\{P_{X_i|X^{i-1}}(dx_i|x^{i-1}):~i=0,1,\ldots,n\}$, and the objective is to determine the causal reproduction conditional distribution $\{P_{Y_i|Y^{i-1},X^i}(dy_i|y^{i-1},x^i):~i=0,1,\ldots,n\}$ which minimizes the directed information from $X^n$ to $Y^n$ subject to distortion or fidelity constraint. The filter $\{Y_i:~i=0,1,\ldots,n\}$ of $\{X_i:~i=0,1,\ldots,n\}$ is found by realizing the optimal reproduction distribution $\{P_{Y_i|X^{i-1},X^i}(dy_i|y^{i-1},x^i):~i=0,1,\ldots,n\}$ via a cascade of sub-systems as shown in Fig. 2. Thus, in nonanticipative rate distortion theory the observation or mapping from $\{X_i:~i=0,1,\ldots,n\}$ to $\{Z_i:~i=0,1,\ldots,n\}$ is part of the realization procedure, while in filtering theory, this mapping is given \'a priori. Indeed, this is the main difference between Bayesian estimation theory and nonanticipative RDF for the purpose of estimation.
\begin{figure}[ht]
\vspace*{0.1cm}
\centering
\includegraphics[scale=0.75]{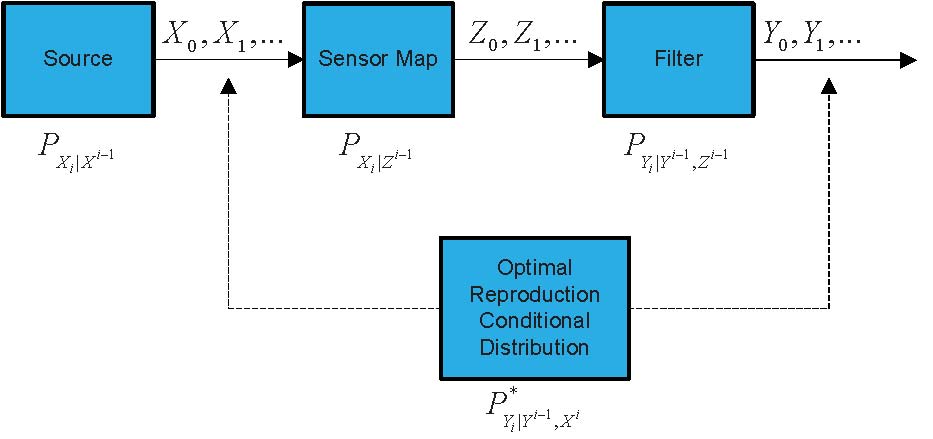}
\caption{Block Diagram of Filtering via Nonanticipative Rate Distortion Function}
\label{filtering_and_causal}
\end{figure}

\noi The precise problem formulation necessitates  the definitions of  distortion function or fidelity, and directed information.\\
The distortion function or fidelity constraint \cite{berger} between $x^n$ and its reproduction $y^n$, is a measurable function defined by
\begin{align*}
d_{0,n} : {\cal X}_{0,n} \times {\cal Y}_{0,n} \rar [0, \infty], \: \: d_{0,n}(x^n,y^n)\tri\frac{1}{n+1}\sum^n_{i=0}\rho_{0,i}(x^i,y^i).
\end{align*}
\noi Directed information  from a sequence of Random Variables (RV's) $X^n\tri\{X_0,X_1,\ldots,X_n\}\in{\cal X}_{0,n}\tri\times_{i=0}^n{\cal X}_i$, to another sequence $Y^n\tri\{Y_0,Y_1,\ldots,Y_n\}\in{\cal Y}_{0,n}\tri\times_{i=0}^n{\cal Y}_i$ is often defined via \cite{massey90,charalambous-stavrou2012}\footnote[4]{Unless otherwise, integrals with respect to probability distributions are over the spaces on which these are defined.}  
\begin{align}
I(X^n\rightarrow{Y}^n)&\tri\sum_{i=0}^n{I}(X^i;Y_i|Y^{i-1})\nonumber\\
&=\sum_{i=0}^n\int\log\Big(\frac{P_{Y_i|Y^{i-1},X^i}(dy_i|y^{i-1},x^i)}{{P}_{Y_i|Y^{i-1}}(dy_i|y^{i-1})}\Big)P_{X^i,Y^i}(dx^i,dy^i)\label{1a}\\
&\equiv\mathbb{I}_{X^n\rightarrow{Y^n}}(P_{X_i|X^{i-1},Y^{i-1}},P_{Y_i|Y^{i-1},X^i}:~i=0,1,\ldots,n).\label{1b}
\end{align}
The notation $\mathbb{I}_{X^n\rightarrow{Y^n}}(\cdot,\cdot)$ illustrates the dependence of directed information $I(X^n\rightarrow{Y^n})$ on the two sequences of nonanticipative or causal conditional distributions $\{P_{X_i|X^{i-1},Y^{i-1}}(\cdot|\cdot,\cdot),~P_{Y_i|Y^{i-1},X^i}(\cdot|\cdot,\cdot)~:~i=0,1,\ldots,n\}$. In information theory, directed information $\mathbb{I}_{X^n\rightarrow{Y^n}}(\cdot,\cdot)$ is often used as a measure of information from the sequence $(X^i,Y^{i-1})$ over the channel $P_{Y_i|Y^{i-1},X^{i}}(\cdot|\cdot,\cdot)$ to the random variable (RV) $Y_i$, $i=0,1,\ldots,n$. Directed information is also used in biological applications \cite{solo2008,quinn-coleman-kiyavash-hatsopoulos2011} as a measure of causality, describing the cause and effect.\\
In this paper, it is assumed that 
\begin{align}
P_{X_i|X^{i-1},Y^{i-1}}(dx_i|x^{i-1},y^{i-1})=P_{X_i|X^{i-1}}(dx_i|x^{i-1})-a.s.,~\forall~i=0,1,\ldots,n. \label{2}
\end{align}
The above assumption states that the process $\{X_i:~i=0,1,\ldots,n\}$ is conditionally independent of $Y^{i-1}=y^{i-1}$ given knowledge of $X^{i-1}=x^{i-1}$. Clearly, (\ref{2}) is implied by the following conditional independence, $P_{Y_i|Y^{i-1},X^{\infty}}$ $(dy_i|y^{i-1},x^{\infty})=P_{Y_i|Y^{i-1},X^i}(dy_i|y^{i-1},x^i)-a.s.,~\forall~i=0,1,\ldots,n$. The last assumption implies that the reproduction of $Y_i$ does not depend on future values $X_{i+1}^{\infty}\tri\{X_{i+1},X_{i+2},\ldots,X_{\infty}\}$, stating that $Y_i$ is nonanticipative with respect to the process $\{X_i:~i=0,1,\ldots,n\}$.\\
Given a sequence of source distributions $\{{P}_{X_i|X^{i-1}}(\cdot|\cdot):~i=0,1,\ldots,n\}$ and a sequence of reproduction conditional distributions $\{P_{Y_i|Y^{i-1},X^i}(\cdot|\cdot,\cdot):~i=0,1,\ldots,n\}$ define the joint distribution $P_{X^n,Y^n}(dx^n,dy^n)={P}_{X^n}(dx^n)\otimes\big(\otimes_{i=0}^n{P}_{Y_i|Y^{i-1},X^i}(dy_i|y^{i-1},x^i)\big)$. The  nonanticipative RDF is a special case of directed information defined for $i=0,1,\ldots,n$, by 
\begin{align}
I_{P_{X^n}}(X^n\rightarrow{Y^n})=\mathbb{I}_{X^n\rightarrow{Y^n}}(P_{X^n},P_{Y_i|Y^{i-1},X^i}:~i=0,1,\ldots,n)\label{3}
\end{align}
{\it Nonanticipative RDF.} The nonanticipative RDF is defined by
\begin{equation}
{R}^{na}_{0,n}(D)\tri \inf_{\substack{P_{Y_i|Y^{i-1},X^i}(\cdot|\cdot,\cdot):~i=0,1,\ldots,n:\\
\mathbb{E}\big\{d_{0,n}(X^n,Y^n)\leq{D}\big\}}}I_{P_{X^n}}(X^n\rightarrow{Y^n}).\label{7}
\end{equation}
The definition of the nonanticipative RDF is consistent with \cite{gorbunov-pinsker} in which nonanticipation is defined via the Markov chain $X_{n+1}^\infty \leftrightarrow X^n \leftrightarrow Y^n$, e.g., $P_{Y^n|X^{\infty}}(dy^n|x^{\infty})=P_{Y^n|X^n}(dy^n|x^n)$. Therefore, by finding the solution of  (\ref{7}), then one can realize it via a channel from which one can construct an optimal filter causally as in Fig.~\ref{filtering_and_causal}.
\par The paper is organized as follows. Section~\ref{abstract} discusses the formulation on abstract spaces. Section~\ref{existence} establishes  existence of optimal minimizing  distribution,  and Section~\ref{necessary} presents the optimal minimizing distribution for stationary processes, which was derived in \cite{charalambous-stavrou-ahmed2013}. Section~\ref{realization1} describes the realization of nonanticipative RDF for a vector partially observable Gaussian system over a scalar additive Gaussian noise communication channel for which the optimal causal filter is obtained. 


\section{Abstract Formulation}\label{abstract}

The source and reproduction alphabets are sequences of Polish spaces \cite{dupuis-ellis97} as defined in the previous section. Probability distributions on any measurable space  $( {\cal Z}, {\cal B}({\cal Z}))$ are denoted by ${\cal M}_1({\cal Z})$. For $({\cal X}, {\cal B}({\cal X})), ({\cal Y}, {\cal B}({\cal Y}))$  measurable spaces, the set of conditional distributions  $P_{Y|X}(\cdot|X=x)$ is denoted by ${\cal Q}({\cal Y};{\cal X})$, and these are equivalent to stochastic kernels on $({\cal Y},{\cal B}({\cal Y}))$ given $({\cal X},{\cal B}({\cal X}))$.\\
Given the process distributions $P_{X^n}(dx^n)$ and $\{P_{Y_i|Y^{i-1},X^i}(dy_i|y^{i-1},x^i):~i=0,1,\ldots,n\}$, the following probability distributions are defined.\\
({\bf P1}): The reproduction conditional probability distribution ${\overrightarrow P}_{Y^n|X^n}(dy^n|x^n)$ $ \in {\cal M}_1({\cal Y}_{0,n})$:
\begin{align}
{\overrightarrow P}_{Y^n|X^n}(dy^n|x^n)&\tri \int_{A_0}P_{Y_0|X_0}(dy_0|x_0)\int_{A_1}P_{Y_1|Y_0,X^1}(dy_1|y_0,x^1)\ldots\nonumber\\
&\ldots\int_{{A}_n}P_{Y_n|Y^{n-1},X^n}(dy_n|y^{n-1},x^n),~~A_{0,n}=\times_{i=0}^n{A_i}\in{\cal B}({\cal X}_{0,n}). \label{4}
\end{align}
({\bf P2}): The joint probability distribution $P_{X^n,Y^n}\in {\cal M}_1({\cal Y}_{0,n}\times {\cal X}_{0, n})$:
\begin{align}
P_{X^n,Y^n}(G_{0,n})&\tri(P_{X^n} \otimes \overrightarrow{P}_{Y^n|X^n})(G_{0,n}),\:G_{0,n} \in {\cal B}({\cal X}_{0,n})\times{\cal B}({\cal Y}_{0,n})\nonumber\\
&=\int \overrightarrow{P}_{Y^n|X^n}(G_{0,n,x^n}|x^n)P_{X^n}(d{x^n})\nonumber
\end{align}
where $G_{0,n,x^n}$ is the $x^n-$section of $G_{0,n}$ at point ${x^n}$ defined by $G_{0,n,x^n}\tri \{y^n \in {\cal Y}_{0,n}: (x^n, y^n) \in G_{0,n}\}$ and $\otimes$ denotes the convolution.\\
({\bf P3}): The marginal distribution $P_{Y^n}\in {\cal M}_1({\cal Y}_{0,n})$:
\begin{align}
P_{Y^n}(F_{0,n})&\tri P({\cal X}_{0, n} \times F_{0,n}),~F_{0,n} \in {\cal B}({\cal Y}_{0,n})\nonumber\\
&=\int \overrightarrow{P}_{Y^n|X^n}(({\cal X}_{0, n}\times F_{0,n})_{{x}^{n}};{x}^{n})P_{X^n}(d{x^n})\nonumber \\
&=\int \overrightarrow{P}_{Y^n|X^n}(F_{0,n}|x^n) P_{X^n}(d{x^n}).\nonumber
\end{align}
({\bf P4}): The product distribution  $\Pi_{0,n}:{\cal B}({\cal X}_{0,n}) \times
{\cal B}({\cal Y}_{0,n}) \mapsto [0,1] $ of $P_{X^n}\in{\cal M}_1({\cal X}_{0, n})$ and $P_{Y^n}\in{\cal M}_1({\cal Y}_{0, n})$:
\begin{align}
\Pi_{0,n}(G_{0,n})&\tri(P_{X^n} \times P_{Y^n})(G_{0,n}),~G_{0,n} \in {\cal B}({\cal X}_{0,n}) \times {\cal B}({\cal Y}_{0,n})\nonumber\\
&=\int_{{\cal X}_{0, n}} P_{Y^n}(G_{0,n,x^n}) P_{X^n}(dx^n).\nonumber
\end{align} 
\noi Directed information (special case) is defined via the Kullback-Leibler distance:
\begin{align}
I_{P_{X^n}}(X^n\rightarrow{Y^n})&\tri\mathbb{D}(P_{X^n,Y^n}|| \Pi_{0,n})=\mathbb{D}(P_{X^n}\otimes{\overrightarrow{P}}_{Y^n|X^n}||P_{X^n}\times{P}_{Y^n})\nonumber\\
&=\int\log \Big( \frac{d  (P_{X^n} \otimes \overrightarrow{P}_{Y^n|X^n}) }{d ( P_{X^n} \times P_{Y^n} ) }\Big) d(P_{X^n} \otimes\overrightarrow{P}_{Y^n|X^n}) \nonumber\\
& = \int \log \Big( \frac{\overrightarrow{P}_{Y^n|X^n}(dy^n|x^n)}{P_{Y^n}(dy^n)} \Big)\overrightarrow{P}_{Y^n|X^n}(dy^n|x^n)\otimes{P}_{X^n}(dx^n)\nonumber\\
&\equiv \mathbb{I}_{X^n\rightarrow{Y^n}}(P_{X^n},\overrightarrow{P}_{Y^n|X^n}).\label{re3}
 \end{align}
Note that (\ref{re3}) states that directed information is expressed as a functional of $\{P_{X^n},\overrightarrow{P}_{Y^n|X^n}\}$. \\
Define the set of all $(n+1)$-fold convolution distributions by
\begin{align*}
{\cal Q}^{c}({\cal Y}_{0,n};{\cal X}_{0,n})=\Big\{&{\overrightarrow P}_{Y^n|X^n}(dy^n|x^n)\in{\cal Q}({\cal Y}_{0,n};{\cal X}_{0,n}):\\
&{\overrightarrow P}_{Y^n|X^n}(dy^n|x^n) \tri \otimes^n_{i=0}P_{Y_i|Y^{i-1},X^i}(dy_i|y^{i-1},x^i)\Big\}.
\end{align*}
Next, the definition of nonanticipative RDF is given.
\begin{definition}\label{def1}
$(${\bf Nonanticipative Rate Distortion Function}$)$
Suppose $d_{0,n}\tri\frac{1}{n+1}\sum^n_{i=0}\rho_{0,i}(x^i,y^i)$ is ${\cal B}({\cal X}_{0,n}) \times {\cal B }( {\cal Y}_{0,n})$-measurable distortion function, and let ${\cal Q}^c_{0,n}(D)$ (assuming is non-empty) denotes the average distortion or fidelity constraint defined by
\begin{align}
 {\cal Q}^c_{{0,n}}(D)\tri\Big\{&\overrightarrow{P}_{Y^n|X^n} \in {\cal Q}^{c}({\cal Y}_{0,n};{\cal X}_{0,n}) :~\ell_{d_{0,n}}(\overrightarrow{P}_{Y^n|X^n})\tri\int d_{0,n}(x^n,y^n) \nonumber \\
&\overrightarrow{P}_{Y^n|X^n}(dy^n|x^n)\otimes{P}_{X^n}(dx^n)\leq D\Big\},~D\geq0.\label{eq2}
\end{align}
The nonanticipative RDF is defined by
\begin{align}
{R}^{na}_{0,n}(D) \tri  \inf_{{\overrightarrow{P}_{Y^n|X^n}\in{\cal Q}^c_{0,n}(D)}}{\mathbb I}_{X^n\rightarrow{Y^n}}({P}_{X^n},\overrightarrow{P}_{Y^n|X^n}).\label{ex12}
\end{align}
\end{definition}
Clearly, ${R}^{na}_{0,n}(D)$ is characterized by minimizing $\mathbb{I}_{X^n\rightarrow{Y^n}}({P}_{X^n},\overrightarrow{P}_{Y^n|X^n})$ over ${\cal Q}^c_{0,n}(D)$.


\section{Existence Of Reproduction Conditional Distribution}\label{existence}

\par In this section, the existence of the minimizing $(n+1)$-fold convolution of conditional distributions in (\ref{ex12}) is established  by using the topology of weak convergence of probability measures on Polish spaces. Before we present the relevant results we state some properties of average distortion set ${\cal Q}^c_{0,n}(D)$ and functional ${\mathbb I}_{X^n\rightarrow{Y^n}}({P}_{X^n},\overrightarrow{P}_{Y^n|X^n})$. The majority of these properties is derived in \cite{charalambous-stavrou2013} for the case of general directed information functional ${\mathbb I}_{X^n\rightarrow{Y^n}}(\overleftarrow{P}_{X^n|Y^{n-1}},\overrightarrow{P}_{Y^n|X^n})$ where
$\overleftarrow{P}_{X^n|Y^{n-1}}(\cdot|y^{n-1})=\otimes_{i=0}^n{P}_{X_i|X^{i-1},Y^{i-1}}(dx_i|x^{i-1},y^{i-1})$.
\begin{theorem}({\bf Convexity Properties})\label{convexity_properties}
Let $\{{\cal X}_n:~n\in\mathbb{N}\}$ and  $\{{\cal Y}_n:~n\in\mathbb{N}\}$ be Polish spaces. Then
\begin{description}
\item[(1)] The set $\overrightarrow{P}_{Y^n|X^n}\in{\cal Q}^{c}({\cal Y}_{0,n};{\cal X}_{0,n})$ is convex.
\item[(2)] ${\mathbb I}_{X^n\rightarrow{Y^n}}({P}_{X^n},\overrightarrow{P}_{Y^n|X^n})$ is a convex functional of $\overrightarrow{P}_{Y^n|X^n}\in{\cal Q}^{c}({\cal Y}_{0,n};{\cal X}_{0,n})$ for a fixed $P_{X^n}\in{\cal M}_1({\cal X}_{0,n})$.
\item[(3)] The set ${\cal Q}^c_{0,n}(D)$ is convex.
\end{description}
\end{theorem}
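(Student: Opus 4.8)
The plan is to treat the three items separately, deriving item (3) from item (1) and relying only on standard convexity facts about relative entropy for item (2).

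\textbf{Item (1).} The key is to replace the defining product form by an equivalent \emph{causality} characterization that is stable under mixing. I would first show that $\overrightarrow{P}_{Y^n|X^n}\in{\cal Q}^{c}({\cal Y}_{0,n};{\cal X}_{0,n})$ if and only if, for every $i=0,1,\ldots,n$, the $Y^i$-marginal of the kernel depends on $x^n$ only through $x^i$; equivalently, $Y_i$ is conditionally independent of $X_{i+1}^n$ given $(Y^{i-1},X^i)$. The forward direction is immediate by integrating out $y_{i+1},\ldots,y_n$ in the convolution, each later factor being a probability kernel that integrates to one, so that $\overrightarrow{P}_{Y^n|X^n}(dy^i|x^n)=\otimes_{j=0}^i P_{Y_j|Y^{j-1},X^j}(dy_j|y^{j-1},x^j)$. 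For the converse I would reconstruct the causal factors by successively disintegrating the $Y^i$-marginal against the $Y^{i-1}$-marginal, both of which depend only on past $x$-coordinates; the required regular conditional probabilities exist because the alphabets are Polish. Granting this characterization, given $\overrightarrow{P}^1,\overrightarrow{P}^2\in{\cal Q}^{c}$ and $\lambda\in[0,1]$, the $Y^i$-marginal of $\lambda\overrightarrow{P}^1+(1-\lambda)\overrightarrow{P}^2$ is the $\lambda$-mixture of the two $Y^i$-marginals, each depending only on $x^i$; hence so does the mixture, and the characterization places it back in ${\cal Q}^{c}$.

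\textbf{Item (2).} I would write the functional as the relative entropy $\mathbb{I}_{X^n\rightarrow{Y^n}}(P_{X^n},\overrightarrow{P}_{Y^n|X^n})=\mathbb{D}\big(P_{X^n}\otimes\overrightarrow{P}_{Y^n|X^n}\,\|\,P_{X^n}\times P_{Y^n}\big)$ as in (\ref{re3}) and exploit that, for $P_{X^n}$ fixed, both arguments are affine in the channel. Indeed $P_{X^n}\otimes\overrightarrow{P}_{Y^n|X^n}$ is affine in $\overrightarrow{P}_{Y^n|X^n}$, and the induced marginal $P_{Y^n}=\int\overrightarrow{P}_{Y^n|X^n}(\cdot|x^n)P_{X^n}(dx^n)$ is linear in $\overrightarrow{P}_{Y^n|X^n}$, so $P_{X^n}\times P_{Y^n}$ is affine too. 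Since $\mathbb{D}(\cdot\|\cdot)$ is jointly convex on the space of probability measures over a Polish space, composing with these affine maps gives convexity: for $\overrightarrow{P}^\lambda=\lambda\overrightarrow{P}^1+(1-\lambda)\overrightarrow{P}^2$, both the joint and the product distributions split as the corresponding $\lambda$-mixtures, and joint convexity yields $\mathbb{I}(P_{X^n},\overrightarrow{P}^\lambda)\leq\lambda\mathbb{I}(P_{X^n},\overrightarrow{P}^1)+(1-\lambda)\mathbb{I}(P_{X^n},\overrightarrow{P}^2)$.

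\textbf{Item (3).} Here I would note that the distortion functional $\ell_{d_{0,n}}(\overrightarrow{P}_{Y^n|X^n})=\int d_{0,n}(x^n,y^n)\,\overrightarrow{P}_{Y^n|X^n}(dy^n|x^n)\otimes{P}_{X^n}(dx^n)$ is linear in $\overrightarrow{P}_{Y^n|X^n}$, since the integrating measure is affine in $\overrightarrow{P}_{Y^n|X^n}$ and $d_{0,n}$ is a fixed measurable function. Hence $\{\overrightarrow{P}_{Y^n|X^n}:\ell_{d_{0,n}}(\overrightarrow{P}_{Y^n|X^n})\leq D\}$ is the sublevel set of an affine functional, which is convex, and ${\cal Q}^c_{0,n}(D)$ is its intersection with the convex set ${\cal Q}^{c}({\cal Y}_{0,n};{\cal X}_{0,n})$ from item (1); an intersection of convex sets is convex.

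The main obstacle is item (1): a naive attempt to mix the per-stage kernels $P_{Y_i|Y^{i-1},X^i}$ fails, because a $\lambda$-mixture of products is not the product of the $\lambda$-mixtures, and re-disintegrating a mixed kernel could a priori introduce dependence on future $x$-coordinates. The reformulation ``the $Y^i$-marginal depends only on $x^i$'' is precisely the linear, mixture-stable version of causality that circumvents this, so the care must go into proving this equivalence rigorously, in particular the disintegration step on Polish spaces and the measurability of the reconstructed factors. Items (2) and (3) are then routine given the joint convexity of relative entropy and the affine/linear structure established above.
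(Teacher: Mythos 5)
Your argument is correct, and it is worth noting how it sits relative to the paper: the paper does not actually prove parts (1) and (2) in the text---it delegates them to \cite[Theorems III.3, III.4]{charalambous-stavrou2013} and then asserts that (3) follows from (1)---so your proposal supplies a self-contained proof where the paper gives a citation. Your route matches the standard one behind the delegated results. For (1), the decisive step is exactly the one you isolate: replacing the product definition of ${\cal Q}^{c}({\cal Y}_{0,n};{\cal X}_{0,n})$ by the mixture-stable reformulation that the $Y^i$-marginal of the kernel depends on $x^n$ only through $x^i$, since, as you say, naive mixing of the per-stage kernels $P_{Y_i|Y^{i-1},X^i}$ fails because a mixture of products is not the product of mixtures. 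For (2), composing the jointly convex relative entropy in the representation (\ref{re3}) with the two affine maps $\overrightarrow{P}_{Y^n|X^n}\mapsto P_{X^n}\otimes\overrightarrow{P}_{Y^n|X^n}$ and $\overrightarrow{P}_{Y^n|X^n}\mapsto P_{X^n}\times P_{Y^n}$ (the latter affine because the marginal $P_{Y^n}$ is linear in the kernel) is the classical Csisz\'ar-style argument, and joint convexity of relative entropy on Polish spaces is available from \cite{dupuis-ellis97}. Your treatment of (3) is in fact slightly more careful than the paper's one-line claim, since you make explicit the second ingredient---linearity of $\ell_{d_{0,n}}$ in $\overrightarrow{P}_{Y^n|X^n}$---without which (3) does not follow from (1) alone. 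The only point that genuinely needs care, which you flag but do not carry out, is the converse half of your characterization in (1): one must produce versions of the disintegrated factors $P_{Y_i|Y^{i-1},X^i}(dy_i|y^{i-1},x^i)$ that are jointly measurable in $(y^{i-1},x^i)$, i.e., a parametrized disintegration of the kernel $x^i\mapsto Q_i(dy^i|x^i)$ against $x^{i-1}\mapsto Q_{i-1}(dy^{i-1}|x^{i-1})$. This is standard on Polish (standard Borel) spaces, but it is a nontrivial lemma and should be cited or proved; with it supplied, your proof is complete and equivalent in substance to the externalized proof the paper relies on.
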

\begin{proof}
Parts (1) and (2) are derived in \cite[Theorems III.3,~III4]{charalambous-stavrou2013}. Part (3) follows from Part (1).\qed
\end{proof}
Let $BC({\cal Z})$ denotes the set of bounded continuous real-valued functions on a Polish space ${\cal Z}$. A sequence $\{P_n:n\geq1\}$ of probability measures is said to {\it converge weakly} to $P\in{\cal M}_1({\cal Z})$ if 
\begin{align}
\lim_{n\longrightarrow\infty}\int_{\cal Z}f(z)dP_n(z)=\int_{\cal Z}f(z)dP(z),~\forall{f}\in{BC}({\cal Z}).\nonumber  
\end{align}
Below, we introduce the main conditions for establishing existence of an optimal solution for the nonanticipative RDF (\ref{ex12}).   
\begin{assumption}\label{conditions-existence}
The following conditions are assumed throughout the paper.
\begin{description}
\item[(1)] ${\cal Y}_{0,n}$ is a compact Polish space, ${\cal X}_{0,n}$ is a Polish space;
\item[(2)] for all $h(\cdot){\in}BC({\cal Y}_{n})$, the function mapping
\begin{align*}
(x^{n},y^{n-1})\in{\cal X}_{0,n}\times{\cal Y}_{0,n-1}\mapsto\int_{{\cal Y}_n}h(y)P_{Y|Y^{n-1},X^n}(dy|y^{n-1},x^n)\in\mathbb{R}
\end{align*} 
is continuous jointly in the variables $(x^{n},y^{n-1})\in{\cal X}_{0,n}\times{\cal Y}_{0,n-1}$;
\item[(3)] $d_{0,n}(x^n,\cdot)$ is continuous on ${\cal Y}_{0,n}$;
\item[(4)] the distortion level $D$ is such that there exist sequence $(x^n,y^{n})\in{\cal X}_{0,n}\times{\cal Y}_{0,n}$ satisfying $d_{0,n}(x^n,y^{n})<D$.
\end{description}
\end{assumption}
Note that since ${\cal Y}_{0,n}$ is assumed to be a compact Polish space, then by \cite{dupuis-ellis97} probability measures on ${\cal Y}_{0,n}$ are weakly compact. Moreover, the following weak compactness result can be obtained, which we use to show existence of an optimal nonanticipative RDF, $R_{0,n}^{na}(D)$.
\begin{lemma}({\bf Compactness})\label{compactness2}
Suppose Assumption~\ref{conditions-existence}, (1), (2) hold.\\
Then
\begin{description}
\item[(1)] The set $\overrightarrow{P}_{Y^n|X^n}\in{\cal Q}^{c}({\cal Y}_{0,n};{\cal X}_{0,n})$ is closed and tight, hence compact.
\item[(2)] Under the additional conditions (3), (4)  the set $ {\cal Q}^c_{0,n}(D)$ is a closed subset of the compact set $\overrightarrow{P}_{Y^n|X^n}\in{\cal Q}^{c}({\cal Y}_{0,n};{\cal X}_{0,n})$, hence compact.
\end{description}
\end{lemma}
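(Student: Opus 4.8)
The plan is to read ``closed, tight, hence compact'' through Prohorov's theorem applied to an associated family of \emph{joint} measures. Since the source $P_{X^n}$ is held fixed, I would identify each $\overrightarrow{P}_{Y^n|X^n}\in{\cal Q}^c({\cal Y}_{0,n};{\cal X}_{0,n})$ with the joint measure $P_{X^n}\otimes\overrightarrow{P}_{Y^n|X^n}\in{\cal M}_1({\cal X}_{0,n}\times{\cal Y}_{0,n})$ of (P2), and topologize ${\cal Q}^c$ by weak convergence of these joints (legitimate precisely because the $X$-marginal is pinned at $P_{X^n}$, so distinct classes are determined $P_{X^n}$-a.s.\ by the joint). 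Compactness of ${\cal Q}^c$ then reduces to showing this family of joint measures is tight and weakly closed.

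For tightness (part (1)) I would argue marginal by marginal. Every joint in the family has $X$-marginal equal to the single measure $P_{X^n}$, which is tight because ${\cal X}_{0,n}$ is Polish; every joint has $Y$-marginal carried by the compact space ${\cal Y}_{0,n}$, so the family of $Y$-marginals is automatically tight (take the compact set to be ${\cal Y}_{0,n}$ itself). Since a family on a product space is tight exactly when both marginal families are tight, the family $\{P_{X^n}\otimes\overrightarrow{P}_{Y^n|X^n}\}$ is tight, and Prohorov's theorem gives relative weak compactness.

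The principal obstacle is the closedness of ${\cal Q}^c$ under weak limits, i.e.\ showing that the convolution (nonanticipative) structure survives passage to the limit. Given $P_{X^n}\otimes\overrightarrow{P}^{(k)}_{Y^n|X^n}\to\nu$ weakly, continuity of the $X$-marginal projection forces $\nu$ to have $X$-marginal $P_{X^n}$, so $\nu$ disintegrates as $P_{X^n}\otimes\overrightarrow{Q}$ for some conditional $\overrightarrow{Q}$; the task is to verify $\overrightarrow{Q}\in{\cal Q}^c$. I would characterize membership in ${\cal Q}^c$ by the per-stage causality (conditional-independence) relations asserting that, for each $i$, the one-step kernel depends on $x^n$ only through $x^i$, i.e.\ the Markov property $(X_{i+1},\ldots,X_n)\leftrightarrow(Y^{i-1},X^i)\leftrightarrow Y_i$. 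Each such relation can be written as equality of two integrals against the joint measure, where the integrand involves $\int_{{\cal Y}_i}h(y)P_{Y_i|Y^{i-1},X^i}(dy|y^{i-1},x^i)$ for $h\in BC({\cal Y}_i)$. The decisive point is Assumption~\ref{conditions-existence}(2): the joint continuity of this conditional expectation makes each integrand a genuine element of $BC$, so that weak convergence transfers every defining identity from the sequence to $\nu$, yielding $\overrightarrow{Q}\in{\cal Q}^c$. Combined with tightness and Prohorov this closes part (1).

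For part (2) I would show that ${\cal Q}^c_{0,n}(D)$ is a closed subset of the compact set just obtained. The constraint functional $\overrightarrow{P}_{Y^n|X^n}\mapsto\ell_{d_{0,n}}(\overrightarrow{P}_{Y^n|X^n})=\int d_{0,n}(x^n,y^n)\,\overrightarrow{P}_{Y^n|X^n}(dy^n|x^n)\otimes P_{X^n}(dx^n)$ integrates the nonnegative function $d_{0,n}$, which by Assumption~\ref{conditions-existence}(3) is lower semicontinuous in $y^n$; by the portmanteau theorem the map $\ell_{d_{0,n}}$ is lower semicontinuous for weak convergence of the joints, so if $\ell_{d_{0,n}}(\overrightarrow{P}^{(k)})\le D$ along a weakly convergent sequence then the limit also satisfies $\ell_{d_{0,n}}\le D$. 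Hence the sublevel set $\{\ell_{d_{0,n}}\le D\}={\cal Q}^c_{0,n}(D)$ is closed, Assumption~\ref{conditions-existence}(4) keeps it nonempty, and a closed subset of a compact set is compact, completing the argument.
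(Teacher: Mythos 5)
Your reduction of part (1) to tightness of the joint measures $P_{X^n}\otimes\overrightarrow{P}_{Y^n|X^n}$ is correct and in fact slicker than the paper's per-stage argument (fixed $X$-marginal on a Polish space plus $Y$-marginals carried by the compact ${\cal Y}_{0,n}$ do give tightness of the family of joints). But the closedness step --- the crux of the lemma --- has a genuine gap. Membership in ${\cal Q}^{c}({\cal Y}_{0,n};{\cal X}_{0,n})$ is a family of conditional-independence (causality) properties, and such properties are \emph{not} preserved under weak convergence of joint measures in general: along your sequence the defining identity reads $\int g(x^n,y^{i-1})h(y_i)\,d\mu_k=\int g(x^n,y^{i-1})\Phi_k(x^i,y^{i-1})\,d\mu_k$, where $\Phi_k(x^i,y^{i-1})=\int_{{\cal Y}_i}h(y)\,P^{(k)}_{Y_i|Y^{i-1},X^i}(dy|y^{i-1},x^i)$ \emph{moves with} $k$. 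Weak convergence $\mu_k\to\nu$ only transfers identities whose integrand is a \emph{fixed} element of $BC$; Assumption~\ref{conditions-existence}(2) makes each $\Phi_k$ continuous, but it supplies neither uniform (or even pointwise-plus-equicontinuous) convergence $\Phi_k\to\Phi$ nor any continuity of the limit disintegration's kernel, which is only defined $P_{X^n}$-a.s.\ and measurable --- so invoking Assumption~\ref{conditions-existence}(2) for the limit kernel is circular. This is precisely why weak limits of Markov structures can fail to be Markov. The paper's proof avoids the issue by never leaving the conditional distributions: it uses per-stage tightness and Prohorov's theorem to extract weakly convergent one-step kernels $P^{(k)}_{Y_i|Y^{i-1},X^i}\buildrel w \over \longrightarrow P^{(0)}_{Y_i|Y^{i-1},X^i}$ for each $i$, and then uses Assumption~\ref{conditions-existence}(2) in an induction to show that the convolution of the limits is the limit of the convolutions, so the limit point carries the factorized causal form by construction rather than by a posteriori verification.

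Part (2) has a parallel gap induced by your choice of topology. Portmanteau-type lower semicontinuity of $\mu\mapsto\int d_{0,n}\,d\mu$ under weak convergence of \emph{joint} measures requires $d_{0,n}\geq 0$ to be \emph{jointly} lower semicontinuous on ${\cal X}_{0,n}\times{\cal Y}_{0,n}$, whereas Assumption~\ref{conditions-existence}(3) grants only continuity of $d_{0,n}(x^n,\cdot)$ in $y^n$ for each fixed $x^n$, which does not imply joint lower semicontinuity. In the paper's setting the step is immediate: for fixed $x^n$, $d_{0,n}(x^n,\cdot)$ is continuous on the compact ${\cal Y}_{0,n}$, hence bounded continuous, so the inner integrals $\int d_{0,n}(x^n,\cdot)\,\overrightarrow{P}^{(k)}_{Y^n|X^n}(\cdot|x^n)$ converge for every $x^n$, and Fatou's lemma over $P_{X^n}$ yields $\ell_{d_{0,n}}(\overrightarrow{P}^{0}_{Y^n|X^n})\leq\liminf_k\ell_{d_{0,n}}(\overrightarrow{P}^{(k)}_{Y^n|X^n})\leq D$ --- the monotone-convergence/Fatou route the paper cites. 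To salvage your version you would have to either strengthen Assumption~\ref{conditions-existence}(3) to joint lower semicontinuity of $d_{0,n}$, or revert to pointwise-in-$x^n$ weak convergence of the conditionals for this step; as written, both closedness arguments rest on properties your topology does not deliver.
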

\begin{proof}
(1) The tightness of the proof is shown in from \cite[Theorem III.5, Part A., A4)]{charalambous-stavrou2013}. This follows from the fact that any ${\overrightarrow{P}}_{Y^n|X^n}(dy^n|x^n)\in{\cal Q}^{c}({\cal Y}_{0,n};{\cal X}_{0,n})$ is factorized as ${\overrightarrow{P}}_{Y^n|X^n}(dy^n|x^n)=\otimes_{i=0}^n{P}_{Y_i|Y^{i-1},X^i}(dy_i|y^{i-1},x^i)$-a.s., where ${P}_{Y_i|Y^{i-1},X^i}(dy_i|y^{i-1},x^i)\in{\cal Q}({\cal Y}_i;{\cal Y}_{0,i-1}\times{\cal X}_{0,i})\subset{\cal M}_1({\cal Y}_i)$, $i=0,1,\ldots,{n}$, and ${\cal Y}_{0,n}$ compact Polish space which implies that $\{P_{Y_i|Y^{i-1},X^i}(\cdot|y^{i-1},x^i):y_0\in{\cal Y}_{0},y_1\in{\cal Y}_{1},\ldots,y_{i-1}\in{\cal Y}_{i-1},x^i\in{\cal X}_{0,i}\}$ is compact, hence by Prohorov's theorem it is uniformly tight $\forall{i}$.\\
Therefore, by Prohorov's theorem \cite{dupuis-ellis97} the compactness of the set $\overrightarrow{P}_{Y^n|X^n}\in{\cal Q}^{c}({\cal Y}_{0,n};{\cal X}_{0,n})$ will follow if we show that it is closed, i.e., given $\{\overrightarrow{P}_{Y^n|X^n}^{\alpha}(\cdot|x^{n}):\alpha=1,2,\ldots\}\subset{\cal Q}^{c}({\cal Y}_{0,n};{\cal X}_{0,n})$ with $\overrightarrow{P}_{Y^n|X^n}^{\alpha}(\cdot|x^{n}) \buildrel w \over \longrightarrow\overrightarrow{P}_{Y^n|X^n}^{0}(\cdot|x^{n})$ then $\overrightarrow{P}_{Y^n|X^n}^{0}(\cdot|x^{n})\in{\cal Q}^{c}({\cal Y}_{0,n};{\cal X}_{0,n})$. Since the family of measures $\overrightarrow{P}_{Y^n|X^n}(\cdot|x^{n})\in{\cal Q}^{c}({\cal Y}_{0,n};{\cal X}_{0,n})$ and $\{P_{Y_i|Y^{i-1},X^i}(\cdot;y^{i-1},x^{i}):~i=0,1,\ldots,n\}$, are tight, and $P_{Y_i|Y^{i-1},X^i}(\cdot;y^{i-1},x^{i})$ are probability measures on ${\cal M}_1({\cal Y}_i)$, $i=0,1,\ldots,n$, then, for $\overrightarrow{P}_{Y^n|X^n}^{\alpha}(\cdot|x^{n})\in{\cal Q}^{c}({\cal Y}_{0,n};{\cal X}_{0,n})$ there is a collection of probability measures $\{P_{Y_i|Y^{i-1},X^i}(\cdot;y^{i-1},x^{i}):i=0,1,\ldots,n\}$ such that
\begin{align*}
P^{\alpha}_{Y_i|Y^{i-1},X^i}(\cdot;y^{i-1},x^{i})\buildrel w \over \longrightarrow{P}^{0}_{Y_i|Y^{i-1},X^i}(\cdot;y^{i-1},x^{i}),~i=0,1,\ldots,n.
\end{align*}
Hence, to show closedness of $\overrightarrow{P}_{Y^n|X^n}\in{\cal Q}^{c}({\cal Y}_{0,n};{\cal X}_{0,n})$ it suffices to show that
\begin{align*}
\otimes_{i=0}^n{P}^{\alpha}_{Y_i|Y^{i-1},X^i}(\cdot;y^{i-1},x^{i})\buildrel w \over \longrightarrow\otimes_{i=0}^n{P}^0_{Y_i|Y^{i-1},X^i}(\cdot;y^{i-1},x^{i})
\end{align*}
whenever $P^{\alpha}_{Y_i|Y^{i-1},X^i}(\cdot;y^{i-1},x^{i})\buildrel w \over \longrightarrow{P}^{0}_{Y_i|Y^{i-1},X^i}(\cdot;y^{i-1},x^{i}),~i=0,1,\ldots,n$. Utilizing Assumptions~\ref{conditions-existence} (2), this can be shown by induction, and hence $\overrightarrow{P}_{Y^n|X^n}\in{\cal Q}^{c}({\cal Y}_{0,n};{\cal X}_{0,n})$ is also a closed set. This completes the derivation of part (1).\\
(2) Utilizing compactness of $\overrightarrow{P}_{Y^n|X^n}\in{\cal Q}^{c}({\cal Y}_{0,n};{\cal X}_{0,n})$, condition (3) of Assumption~\ref{conditions-existence} on  $d_{0,n}(x^n,\cdot)$, and some fundamental measure theoretic results like Lebesgue's monotone convergence theorem and Fatou's lemma, it can be shown that $ {\cal Q}^c_{0,n}(D)$ is a closed subset of ${\cal Q}^{c}({\cal Y}_{0,n};{\cal X}_{0,n})$, and hence by Prohorov's theorem it is compact.\qed
\end{proof}

\noi The previous results utilize Prohorov's theorem that relates tightness and weak compactness.
\par The next theorem establishes existence of the minimizing reproduction distribution for (\ref{ex12}). We need the following theorem derived in \cite{charalambous-stavrou2013}.
\begin{lemma}({\bf Lower Semicontinuity})\label{lower-semicontinuity}
Under Assumption~\ref{conditions-existence} (1), (2), ${\mathbb{I}}_{X^n\rightarrow{Y^n}}({P}_{X^n}, {\overrightarrow P}_{Y^n|X^n})$ is lower semicontinuous on ${\overrightarrow P}_{Y^n|X^n}\in{\cal Q}^{c}({\cal Y}_{0,n};{\cal X}_{0,n})$ for a fixed ${P}_{X^n}\in{\cal M}_1({\cal X}_{0,n})$.
\end{lemma}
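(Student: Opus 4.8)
The plan is to recognize that directed information, in the form \eqref{re3}, is a relative entropy $\mathbb{I}_{X^n\rightarrow{Y^n}}(P_{X^n},\overrightarrow{P}_{Y^n|X^n})=\mathbb{D}(P_{X^n,Y^n}\|\Pi_{0,n})$ between two measures that \emph{both} depend on the minimizing variable $\overrightarrow{P}_{Y^n|X^n}$, namely $P_{X^n,Y^n}=P_{X^n}\otimes\overrightarrow{P}_{Y^n|X^n}$ and $\Pi_{0,n}=P_{X^n}\times P_{Y^n}$, with $P_{Y^n}$ the $Y^n$-marginal induced by $\overrightarrow{P}_{Y^n|X^n}$ and the fixed $P_{X^n}$. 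I would therefore reduce the claim to two ingredients: (i) the maps $\overrightarrow{P}_{Y^n|X^n}\mapsto P_{X^n,Y^n}$ and $\overrightarrow{P}_{Y^n|X^n}\mapsto\Pi_{0,n}$ are continuous for the weak topology; and (ii) the map $(\mu,\nu)\mapsto\mathbb{D}(\mu\|\nu)$ is jointly lower semicontinuous on ${\cal M}_1({\cal Z})\times{\cal M}_1({\cal Z})$ for any Polish ${\cal Z}$, a standard fact \cite{dupuis-ellis97}. Since the weak topology on probability measures over a Polish space is metrizable, it suffices to work with sequences $\{\overrightarrow{P}^{\alpha}_{Y^n|X^n}\}$ as in Lemma~\ref{compactness2}.

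For the continuity in (i), suppose $\overrightarrow{P}^{\alpha}_{Y^n|X^n}(\cdot|x^n)\buildrel w \over \longrightarrow\overrightarrow{P}^0_{Y^n|X^n}(\cdot|x^n)$ in the sense already used in Lemma~\ref{compactness2}. For any $g\in BC({\cal X}_{0,n}\times{\cal Y}_{0,n})$ I would write $\int g\, dP^{\alpha}_{X^n,Y^n}=\int\big(\int g(x^n,y^n)\overrightarrow{P}^{\alpha}_{Y^n|X^n}(dy^n|x^n)\big)P_{X^n}(dx^n)$. For each fixed $x^n$ the inner integral converges to its $\overrightarrow{P}^0$-counterpart because $g(x^n,\cdot)\in BC({\cal Y}_{0,n})$; here Assumption~\ref{conditions-existence}(2) is precisely what propagates the factorwise weak convergence $P^{\alpha}_{Y_i|Y^{i-1},X^i}\buildrel w \over \longrightarrow P^0_{Y_i|Y^{i-1},X^i}$ through the $(n+1)$-fold convolution, via the same induction carried out in Lemma~\ref{compactness2}. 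Since $g$ is bounded and ${\cal Y}_{0,n}$ is compact, the inner integrals are uniformly bounded in $x^n$, so dominated convergence with respect to $P_{X^n}$ gives $\int g\, dP^{\alpha}_{X^n,Y^n}\to\int g\, dP^0_{X^n,Y^n}$, i.e. $P^{\alpha}_{X^n,Y^n}\buildrel w \over \longrightarrow P^0_{X^n,Y^n}$. Specializing to $g(x^n,y^n)=h(y^n)$ yields $P^{\alpha}_{Y^n}\buildrel w \over \longrightarrow P^0_{Y^n}$, and since the first factor $P_{X^n}$ is held fixed, this carries over to the products, $\Pi^{\alpha}_{0,n}=P_{X^n}\times P^{\alpha}_{Y^n}\buildrel w \over \longrightarrow\Pi^0_{0,n}$.

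Having both $P^{\alpha}_{X^n,Y^n}\buildrel w \over \longrightarrow P^0_{X^n,Y^n}$ and $\Pi^{\alpha}_{0,n}\buildrel w \over \longrightarrow\Pi^0_{0,n}$, ingredient (ii) gives $\mathbb{D}(P^0_{X^n,Y^n}\|\Pi^0_{0,n})\le\liminf_{\alpha}\mathbb{D}(P^{\alpha}_{X^n,Y^n}\|\Pi^{\alpha}_{0,n})$, which is exactly the asserted lower semicontinuity of $\overrightarrow{P}_{Y^n|X^n}\mapsto\mathbb{I}_{X^n\rightarrow{Y^n}}(P_{X^n},\overrightarrow{P}_{Y^n|X^n})$. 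The cleanest justification of the joint lower semicontinuity is the Donsker--Varadhan representation $\mathbb{D}(\mu\|\nu)=\sup_{\phi\in BC({\cal Z})}\{\int\phi\, d\mu-\log\int e^{\phi}\, d\nu\}$: for each fixed $\phi$ the bracketed functional is jointly weakly continuous in $(\mu,\nu)$ since $\phi$ and $e^{\phi}$ are bounded continuous, and a pointwise supremum of continuous functionals is lower semicontinuous.

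The step I expect to be the main obstacle is the continuity in (i), specifically the fact that the reference measure $\Pi_{0,n}$ also varies with $\overrightarrow{P}_{Y^n|X^n}$: both arguments of $\mathbb{D}(\cdot\|\cdot)$ move simultaneously, so one cannot freeze the denominator and must genuinely rely on the \emph{joint} lower semicontinuity of relative entropy. The technical work concentrates in keeping the factorwise weak convergence alive through the convolution (the induction underwritten by Assumption~\ref{conditions-existence}(2)) and in the dominated-convergence passage against the fixed $P_{X^n}$, where compactness of ${\cal Y}_{0,n}$ is what provides the uniform bound on the inner integrals.
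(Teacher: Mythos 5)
Your proposal is correct, but it is genuinely more than what the paper does: the paper's own proof is a pure citation, specializing Theorem III.7 of \cite{charalambous-stavrou2013} (lower semicontinuity of the two-kernel functional ${\mathbb I}_{X^n\rightarrow{Y^n}}(\overleftarrow{P}_{X^n|Y^{n-1}},\overrightarrow{P}_{Y^n|X^n})$) to the degenerate backward kernel $\overleftarrow{P}_{X^n|Y^{n-1}}(\cdot|y^{n-1})=P_{X^n}(\cdot)$ a.s., whereas you reconstruct a self-contained argument. Your route rests on the same two ingredients that underlie the cited result: the representation (\ref{re3}) of directed information as the relative entropy $\mathbb{D}\big(P_{X^n}\otimes\overrightarrow{P}_{Y^n|X^n}\,\|\,P_{X^n}\times P_{Y^n}\big)$ in which \emph{both} arguments move with the kernel, and the joint lower semicontinuity of $(\mu,\nu)\mapsto\mathbb{D}(\mu\|\nu)$ under weak convergence, a standard fact recorded in \cite{dupuis-ellis97} which you correctly justify via the Donsker--Varadhan variational formula (pointwise supremum of jointly weakly continuous functionals). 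Your continuity step (i) is also sound: with the pointwise-in-$x^n$ weak convergence of conditional measures used in Lemma~\ref{compactness2} (propagated through the $(n+1)$-fold convolution by the induction that Assumption~\ref{conditions-existence}(2) enables), the sectionwise convergence of $\int g(x^n,\cdot)\,d\overrightarrow{P}^{\alpha}_{Y^n|X^n}(\cdot|x^n)$ plus the uniform bound $\|g\|_\infty$ and bounded convergence against the fixed $P_{X^n}$ gives $P^{\alpha}_{X^n,Y^n}\buildrel w \over \longrightarrow P^{0}_{X^n,Y^n}$, and the passage to $P_{X^n}\times P^{\alpha}_{Y^n}\buildrel w \over \longrightarrow P_{X^n}\times P^{0}_{Y^n}$ is legitimate precisely because one factor is held fixed. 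Your diagnosis of the crux --- that the reference measure $\Pi_{0,n}$ cannot be frozen, so ordinary lower semicontinuity in the first argument alone does not suffice --- is exactly the point the delegated proof must also handle. In short: the paper's citation buys brevity and covers the more general functional; your version buys a verifiable proof within Assumption~\ref{conditions-existence}(1)--(2) alone, and would serve as a correct replacement for the omitted argument.
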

\begin{proof}
The proof is immediate from \cite[Theorem III.7]{charalambous-stavrou2013}, and it is obtained by just relegating the general directed information functional ${\mathbb I}_{X^n\rightarrow{Y^n}}(\overleftarrow{P}_{X^n|Y^{n-1}},\overrightarrow{P}_{Y^n|X^n})$  to the special case of ${\mathbb{I}}_{X^n\rightarrow{Y^n}}({P}_{X^n}, {\overrightarrow P}_{Y^n|X^n})$ where
$\overleftarrow{P}_{X^n|Y^{n-1}}(\cdot|y^{n-1})=P_{X^n}(x^{n})-a.s.$\qed
\end{proof}
By Lemma~\ref{compactness2} and Lemma~\ref{lower-semicontinuity} we have the following existence result.
\begin{theorem}({\bf Existence})\label{existence_rd}
Suppose the conditions and results of Lemma~\ref{compactness2} and Lemma~\ref{lower-semicontinuity} hold. Then ${R}^{na}_{0,n}(D)$ has a minimum.
\end{theorem}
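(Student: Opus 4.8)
The plan is to invoke the standard Weierstrass-type argument, namely that a lower semicontinuous functional attains its infimum on a sequentially compact set. Since all the analytic work has been front-loaded into Lemma~\ref{compactness2} and Lemma~\ref{lower-semicontinuity}, the proof should reduce to combining these two facts through a minimizing sequence. First I would record that the feasible set ${\cal Q}^c_{0,n}(D)$ is nonempty (this is assumed in Definition~\ref{def1}) and that the functional ${\mathbb I}_{X^n\rightarrow{Y^n}}({P}_{X^n},\overrightarrow{P}_{Y^n|X^n})$, being a relative entropy, is bounded below by zero; hence the infimum ${R}^{na}_{0,n}(D)$ is a well-defined finite nonnegative number. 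I would then extract a minimizing sequence $\{\overrightarrow{P}^{\alpha}_{Y^n|X^n}\}_{\alpha\geq1}\subset{\cal Q}^c_{0,n}(D)$ satisfying ${\mathbb I}_{X^n\rightarrow{Y^n}}({P}_{X^n},\overrightarrow{P}^{\alpha}_{Y^n|X^n})\longrightarrow{R}^{na}_{0,n}(D)$ as $\alpha\to\infty$.

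Next, because ${\cal Q}^c_{0,n}(D)$ is compact in the topology of weak convergence by Lemma~\ref{compactness2}, and because on Polish spaces the weak topology is metrizable so that weak compactness coincides with sequential weak compactness (which is precisely what the Prohorov theorem underlying Lemma~\ref{compactness2} delivers), the minimizing sequence admits a weakly convergent subsequence $\overrightarrow{P}^{\alpha_k}_{Y^n|X^n}\buildrel w \over\longrightarrow\overrightarrow{P}^{*}_{Y^n|X^n}$, whose limit $\overrightarrow{P}^{*}_{Y^n|X^n}$ lies in ${\cal Q}^c_{0,n}(D)$ by the closedness asserted in the same lemma. Applying the lower semicontinuity of Lemma~\ref{lower-semicontinuity} along this subsequence then gives
\begin{align*}
{\mathbb I}_{X^n\rightarrow{Y^n}}({P}_{X^n},\overrightarrow{P}^{*}_{Y^n|X^n})\leq\liminf_{k\to\infty}{\mathbb I}_{X^n\rightarrow{Y^n}}({P}_{X^n},\overrightarrow{P}^{\alpha_k}_{Y^n|X^n})={R}^{na}_{0,n}(D),
\end{align*}
while the definition of the infimum forces the reverse inequality ${\mathbb I}_{X^n\rightarrow{Y^n}}({P}_{X^n},\overrightarrow{P}^{*}_{Y^n|X^n})\geq{R}^{na}_{0,n}(D)$, since $\overrightarrow{P}^{*}_{Y^n|X^n}$ is feasible. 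Equality therefore holds, the infimum is attained at $\overrightarrow{P}^{*}_{Y^n|X^n}$, and consequently ${R}^{na}_{0,n}(D)$ has a minimum.

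Given that compactness and lower semicontinuity are already in hand, I do not anticipate any serious obstacle; the only point demanding care is the transition from topological compactness to sequential compactness, which is legitimate precisely because the weak topology on ${\cal M}_1({\cal Y}_{0,n})$ over a Polish space is metrizable. If one wished to bypass sequences altogether, the same conclusion would follow from the topological extreme value theorem that a lower semicontinuous function on a compact space attains its minimum; however, the minimizing-sequence route is the most transparent and is consistent with the weak convergence framework used throughout the paper.
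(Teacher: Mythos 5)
Your proposal is correct and is essentially the paper's argument: the paper simply invokes the extended Weierstrass theorem (a lower semicontinuous function on a compact set attains its minimum) given Lemma~\ref{compactness2} and Lemma~\ref{lower-semicontinuity}, and your minimizing-sequence derivation is precisely the standard proof of that theorem, made legitimate exactly as you note by metrizability of the weak topology on ${\cal M}_1({\cal Y}_{0,n})$ for Polish ${\cal Y}_{0,n}$. The only quibble is your assertion that ${R}^{na}_{0,n}(D)$ is finite, which is not needed and not guaranteed a priori (the functional may take the value $+\infty$); the attainment argument goes through regardless.
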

\begin{proof}
Provided that the results from Lemma~\ref{compactness2} and Lemma~\ref{lower-semicontinuity} hold, then the existence of a global minimum solution follows from an extended version of Weierstrass' theorem (e.g., a lower semicontinuous function on a compact set attains its minimum).\qed
\end{proof} 
\noi The fundamental difference between \cite{charalambous-stavrou-ahmed2013} and this paper, is that we show existence of solution to the nonanticipative RDF using the topology of weak convergence of probability measures under very relaxed conditions, i.e., Assumption~\ref{conditions-existence}. These are natural generalization of the existence result discussed by Csisz\'ar in \cite{csiszar74}, for single letter classical RDF.


\section{Optimal Reproduction of Nonanticipative RDF for Stationary Processes}\label{necessary}

In this section, we present the form of the optimal stationary reproduction conditional distribution. Since we have shown existence of solution to the nonanticipative RDF, the method of obtaining the optimal solution is identical to the one in \cite[Section IV]{charalambous-stavrou-ahmed2013}. We introduce the following main assumption.
\begin{assumption}$(${\bf Stationarity}$)$\label{stationarity}
The $(n+1)$-fold convolution of conditional distribution $\overrightarrow{P}_{Y^n|X^n}(dy^n|x^n)=\otimes^n_{i=0}P_{Y_i|Y^{i-1},X^i}$ $(dy_i|y^{i-1},x^i)$, is the convolution of stationary conditional distributions.
\end{assumption}

\noi Assumption~\ref{stationarity} holds for stationary process $\{(X_i,Y_i):i\in\mathbb{N}\}$ and $\rho_{0,i}(x^i,y^i)\equiv\rho(T^i{x^n},T^i{y^n})$, where $T^i{x^n}=\tilde{x}^{n}$ is the $i^{th}$ shift operator on the source sequence $x^n$, with $\tilde{x}_{n}=x_{n+i}$ (similarly for $T^i{y^n}$), and $\sum_{i=0}^n\rho(T^ix^n,T^iy^n)$ depends only on the components of $(x^n,y^n)$  \cite{gray2010}. The consequence of Assumption~\ref{stationarity}, which holds for stationary processes  and a single letter distortion function, is that the Gateaux differential of $\mathbb{I}_{X^n\rightarrow{Y^n}}(P_{X^n},\overrightarrow{P}_{Y^n|X^n})$ is done in only one direction $\big{(}$since $P_{Y_i|Y^{i-1},X^i}(dy_i|y^{i-1},x^i)$ are stationary$\big{)}$. Therefore, we define the variation of $\overrightarrow{P}_{Y^n|X^n}$ in the direction of $\overrightarrow{P}_{Y^n|X^n}-\overrightarrow{P}_{Y^n|X^n}^0$ via $\overrightarrow{P}_{Y^n|X^n}^{\epsilon}\tri\overrightarrow{P}_{Y^n|X^n}+\epsilon\big{(}\overrightarrow{P}_{Y^n|X^n}-\overrightarrow{P}_{Y^n|X^n}^0\big{)}$, $\epsilon\in[0,1]$, since under Assumption~\ref{stationarity}, the functionals $\{P_{Y_i|Y^{i-1},X^i}(dy_i|y^{i-1},x^i)\in{\cal Q}({\cal Y}_i;{\cal Y}_{0,i-1}\times{\cal X}_{0,i}):~i=0,1,\ldots,n\}$ are identical.
\begin{theorem} \label{th5}
Suppose Assumption~\ref{stationarity} holds and~${\mathbb I}_{P_{X^n}}(\overrightarrow{P}_{Y^n|X^n}) \tri\mathbb{I}_{X^n\rightarrow{Y^n}}$ $(P_{X^n},\overrightarrow{P}_{Y^n|X^n})$ is well defined for every $\overrightarrow{P}_{Y^n|X^n}\in {\cal Q}^c_{0,n}(D)$ possibly taking values from the set $[0,\infty]$. Then  $\overrightarrow{P}_{Y^n|X^n} \rightarrow {\mathbb I}_{P_{X^n}}(\overrightarrow{P}_{Y^n|X^n})$ is Gateaux differentiable at every point in ${\cal Q}^c_{0,n}(D)$, and the Gateaux derivative at the  point $\overrightarrow{P}_{Y^n|X^n}^0$ in the direction $\overrightarrow{P}_{Y^n|X^n}-\overrightarrow{P}_{Y^n|X^n}^0$ is given
by
\begin{eqnarray}
&&\delta{\mathbb I}_{P_{X^n}}(\overrightarrow{P}_{Y^n|X^n}^0,\overrightarrow{P}_{Y^n|X^n}-\overrightarrow{P}_{Y^n|X^n}^0)\nonumber\\
&&=\int\log \Bigg(\frac{\overrightarrow{P}_{Y^n|X^n}^0(dy^n|x^n)}{P_{Y^n}^0(dy^n)}\Bigg)(\overrightarrow{P}_{Y^n|X^n}-\overrightarrow{P}_{Y^n|X^n}^0)(dy^n|x^n) P_{X^n}(dx^n)\nonumber
\end{eqnarray}
where $P_{Y^n}^0\in{\cal M}_1({\cal Y}_{0,n})$ is the marginal measure corresponding
to $\overrightarrow{P}_{Y^n|X^n}^0\otimes{P}_{X^n}(dx^n)\in{\cal M}_1({\cal Y}_{0,n}\times{\cal X}_{0,n})$.
\end{theorem}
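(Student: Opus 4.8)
The plan is to evaluate the one-sided directional (Gateaux) derivative straight from its definition,
\[
\delta{\mathbb I}_{P_{X^n}}(\overrightarrow{P}_{Y^n|X^n}^0,\overrightarrow{P}_{Y^n|X^n}-\overrightarrow{P}_{Y^n|X^n}^0)=\lim_{\epsilon\downarrow0}\frac{1}{\epsilon}\Big({\mathbb I}_{P_{X^n}}(\overrightarrow{P}_{Y^n|X^n}^{\epsilon})-{\mathbb I}_{P_{X^n}}(\overrightarrow{P}_{Y^n|X^n}^0)\Big),
\]
with $\overrightarrow{P}_{Y^n|X^n}^{\epsilon}\tri\overrightarrow{P}_{Y^n|X^n}^0+\epsilon(\overrightarrow{P}_{Y^n|X^n}-\overrightarrow{P}_{Y^n|X^n}^0)$. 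By Theorem~\ref{convexity_properties} the set ${\cal Q}^c_{0,n}(D)$ is convex, so $\overrightarrow{P}_{Y^n|X^n}^{\epsilon}\in{\cal Q}^c_{0,n}(D)$ for all $\epsilon\in[0,1]$, keeping every perturbation admissible; Assumption~\ref{stationarity} is precisely what reduces the variation to this single direction, since the stationary per-letter kernels are identical.

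First I would record the affine dependence on $\epsilon$ of the measures entering (\ref{re3}). Set $\mu^{\epsilon}\tri P_{X^n}\otimes\overrightarrow{P}_{Y^n|X^n}^{\epsilon}$ and $\nu^{\epsilon}\tri P_{X^n}\times P_{Y^n}^{\epsilon}$, so that ${\mathbb I}_{P_{X^n}}(\overrightarrow{P}_{Y^n|X^n}^{\epsilon})=\mathbb{D}(\mu^{\epsilon}\|\nu^{\epsilon})$. The joint $\mu^{\epsilon}$ is affine in $\epsilon$ by construction, and because the reproduction marginal is obtained by integrating $\mu^{\epsilon}$ against the fixed $P_{X^n}$, the marginal $P_{Y^n}^{\epsilon}=(1-\epsilon)P_{Y^n}^0+\epsilon P_{Y^n}$ and hence the product $\nu^{\epsilon}$ are also affine in $\epsilon$; crucially $\nu^{\epsilon}$ and $\nu^0$ carry the common $X^n$-marginal $P_{X^n}$. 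Write $\mu^1,\nu^1$ for the endpoint measures at $\epsilon=1$, which correspond to $\overrightarrow{P}_{Y^n|X^n}$.

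Next I would differentiate $\epsilon\mapsto\int\log\frac{d\mu^{\epsilon}}{d\nu^{\epsilon}}\,d\mu^{\epsilon}$ under the integral sign. The chain rule applied to the integrand produces three pieces: the surviving term $\int\log\frac{d\mu^0}{d\nu^0}\,d(\mu^1-\mu^0)$, a mass term $\int d(\mu^1-\mu^0)$, and a denominator term $-\int\frac{d\mu^0}{d\nu^0}\,d(\nu^1-\nu^0)$. The mass term is $0$ because $\mu^1$ and $\mu^0$ are both probability measures. The denominator term also vanishes: $\frac{d\mu^0}{d\nu^0}$ equals $\frac{\overrightarrow{P}_{Y^n|X^n}^0(dy^n|x^n)}{P_{Y^n}^0(dy^n)}$, and integrating it against $P_{X^n}(dx^n)$ returns the constant $1$ for $P_{Y^n}^0$-almost every $y^n$ by the disintegration identity that defines $P_{Y^n}^0$; since $\nu^1$ and $\nu^0$ share the $X^n$-marginal, this reduces the piece to $\int 1\,d(P_{Y^n}-P_{Y^n}^0)=0$. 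After substituting $\mu^1-\mu^0=P_{X^n}\otimes(\overrightarrow{P}_{Y^n|X^n}-\overrightarrow{P}_{Y^n|X^n}^0)$, the surviving term is exactly the asserted formula.

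The step I expect to be the main obstacle is the rigorous interchange of $\lim_{\epsilon\downarrow0}$ with the integral, since the integrand contains $\log\frac{d\mu^{\epsilon}}{d\nu^{\epsilon}}$ and ${\mathbb I}_{P_{X^n}}$ may equal $+\infty$; this is also where the standing hypothesis that ${\mathbb I}_{P_{X^n}}$ is well defined on ${\cal Q}^c_{0,n}(D)$ is used. Here I would invoke convexity: by Theorem~\ref{convexity_properties}(2) the scalar map $\epsilon\mapsto{\mathbb I}_{P_{X^n}}(\overrightarrow{P}_{Y^n|X^n}^{\epsilon})$ is convex on $[0,1]$, so its difference quotient is nondecreasing as $\epsilon\downarrow0$ and the right derivative exists in $[-\infty,\infty)$. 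The monotonicity lets me pass the limit inside by the monotone convergence theorem, with Fatou's lemma controlling the negative part and absorbing the absolute-continuity technicalities left open by the formal computation, along the lines of the companion derivation in \cite{charalambous-stavrou-ahmed2013}.
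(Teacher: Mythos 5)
Your proposal is correct and follows essentially the same route as the paper, whose proof simply defers to \cite[Theorem 4.1]{farzad06}: the affine perturbation $\overrightarrow{P}^{\epsilon}_{Y^n|X^n}$, the vanishing of the mass term and of the marginal (denominator) term via the disintegration identity defining $P^0_{Y^n}$, and the convexity-induced monotone difference quotients justifying the limit--integral interchange are precisely the ingredients of that cited derivation. You merely carry out explicitly the computation the paper leaves to the reference.
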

\begin{proof}
The proof is similar to the one in \cite[Theorem 4.1]{farzad06}.\qed
\end{proof}
\noi By Theorem~\ref{convexity_properties}, the nonanticipative RDF is a convex optimization problem, and by Theorem~\ref{existence_rd} a solution exists. Hence, using these theorems, it can be shown that the constrained problem defined by (\ref{ex12}) can be reformulated as an unconstrained problem using Lagrange multipliers. This procedure is described in \cite[Theorem IV.3]{charalambous-stavrou-ahmed2013}, hence it is omitted; preferably, we state the main result, that is, the optimal reproduction conditional distribution that characterized nonanticipative RDF is defined.
\begin{theorem}({\bf Optimal Reproduction of Nonanticipative RDF}) \label{th6}
Suppose the Assumptions~\ref{conditions-existence} hold and consider $d_{0,n}(x^n,y^n)\tri\frac{1}{n+1}\sum_{i=0}^n\rho(T^i{x^n},T^i{y^n})$. Then
\begin{description}
\item[(1)] The infimum is attained at  $\overrightarrow{P}^*_{Y^n|X^n} \in{\cal Q}^c_{0,n}(D)$ given by\footnote{Due to stationarity assumption $P_{Y_i|Y^{i-1}}(\cdot|\cdot)=P(\cdot|\cdot)$ and ${P}^*_{Y_i|Y^{i-1},X^i}(\cdot|\cdot,\cdot)={P}^*(\cdot|\cdot,\cdot)$}
\begin{align}
\overrightarrow{P}^*_{Y^n|X^n}(dy^n|x^n)&=\otimes_{i=0}^n{P}^*_{Y_i|Y^{i-1},X^i}(dy_i|y^{i-1},x^i)\nonumber\\
&=\otimes_{i=0}^n\frac{e^{s \rho(T^i{x^n},T^i{y^n})}P^*_{Y_i|Y^{i-1}}(dy_i|y^{i-1})}{\int_{{\cal Y}_i} e^{s \rho(T^i{x^n},T^i{y^n})} P^*_{Y_i|Y^{i-1}}(dy_i|y^{i-1})},~s\leq{0}\label{ex14}
\end{align}
and $P^*_{Y_i|Y^{i-1}}(dy_i|y^{i-1})\in {\cal Q}({\cal Y}_i;{\cal Y}_{0,{i-1}})$. 
\item[(2)] The nonanticipative RDF is given by
\begin{align}
{R}_{0,n}^{na}(D)&\left.=(n+1)sD -\sum_{i=0}^n\int\log \Big( \int_{{\cal Y}_i} e^{s\rho(T^i{x^n},T^i{y^n})} P^*_{Y_i|Y^{i-1}}(dy_i|y^{i-1})\Big)\right.\nonumber\\[-1.5ex]\label{ex15}\\[-1.5ex]
&\quad\left.\times{\overrightarrow{P}^*_{Y^{i-1}|X^{i-1}}(dy^{i-1}|x^{i-1})\otimes{P}_{X^i}(dx^i).}\right.\nonumber
\end{align}
If ${R}_{0,n}^{na}(D) > 0$ then $ s < 0$  and
\begin{eqnarray}
\frac{1}{n+1}\sum_{i=0}^n\int\rho(T^i{x^n},T^i{y^n})\overrightarrow{P}^*_{Y^{i}|X^{i}}(dy^i;x^i)P_{X^i}(dx^i)=D.\label{eq.7}
\end{eqnarray}
\end{description}
\end{theorem}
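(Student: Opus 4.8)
The plan is to exploit the convex structure established earlier and recast the constrained minimization (\ref{ex12}) as an unconstrained Lagrangian problem. By Theorem~\ref{convexity_properties} the map $\overrightarrow{P}_{Y^n|X^n}\mapsto{\mathbb I}_{X^n\rightarrow{Y^n}}(P_{X^n},\overrightarrow{P}_{Y^n|X^n})$ is convex on the convex set ${\cal Q}^c_{0,n}(D)$, and by Theorem~\ref{existence_rd} a minimizer exists; moreover the average distortion $\ell_{d_{0,n}}(\overrightarrow{P}_{Y^n|X^n})$ is affine in $\overrightarrow{P}_{Y^n|X^n}$ and Assumption~\ref{conditions-existence}~(4) furnishes a strictly feasible point, i.e.\ a Slater condition. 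I would therefore introduce a multiplier $s\le0$ for the distortion constraint, written in the additive form $\sum_{i=0}^n\int\rho(T^i{x^n},T^i{y^n})\,\overrightarrow{P}_{Y^n|X^n}(dy^n|x^n)\otimes P_{X^n}(dx^n)\le(n+1)D$, and replace (\ref{ex12}) by the unconstrained minimization of
\[
{\mathbb I}_{X^n\rightarrow{Y^n}}(P_{X^n},\overrightarrow{P}_{Y^n|X^n})-s\Big(\textstyle\sum_{i=0}^n\int\rho(T^i{x^n},T^i{y^n})\,\overrightarrow{P}_{Y^n|X^n}(dy^n|x^n)\otimes P_{X^n}(dx^n)-(n+1)D\Big)
\]
over $\overrightarrow{P}_{Y^n|X^n}\in{\cal Q}^{c}({\cal Y}_{0,n};{\cal X}_{0,n})$.

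Next I would obtain the minimizer from a first-order condition. Because the objective is convex and the constraint affine, $\overrightarrow{P}^*_{Y^n|X^n}$ is optimal if and only if the Gateaux derivative of the Lagrangian at $\overrightarrow{P}^*_{Y^n|X^n}$ in every admissible direction $\overrightarrow{P}_{Y^n|X^n}-\overrightarrow{P}^*_{Y^n|X^n}$ is nonnegative. Invoking Theorem~\ref{th5} for the directed-information term and the linearity of the distortion term, this reads
\[
\int\Big[\log\frac{\overrightarrow{P}^*_{Y^n|X^n}(dy^n|x^n)}{P^{*}_{Y^n}(dy^n)}-s\textstyle\sum_{i=0}^n\rho(T^i{x^n},T^i{y^n})\Big]\big(\overrightarrow{P}_{Y^n|X^n}-\overrightarrow{P}^*_{Y^n|X^n}\big)(dy^n|x^n)\,P_{X^n}(dx^n)\ge0 .
\]
Since the variation ranges over all conditional distributions in ${\cal Q}^{c}({\cal Y}_{0,n};{\cal X}_{0,n})$, the bracketed integrand must be constant in $y^n$ for $P_{X^n}$-a.e.\ $x^n$; solving this stationarity relation and imposing that $\overrightarrow{P}^*_{Y^n|X^n}(\cdot|x^n)$ integrate to one yields the Gibbs-type kernel $\overrightarrow{P}^*_{Y^n|X^n}(dy^n|x^n)\propto e^{s\sum_{i=0}^n\rho(T^i{x^n},T^i{y^n})}P^*_{Y^n}(dy^n)$.

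Finally I would convert this joint form into the stated per-stage product (\ref{ex14}) and derive (\ref{ex15})--(\ref{eq.7}). Since $e^{s\sum_{i=0}^n\rho(T^i{x^n},T^i{y^n})}=\prod_{i=0}^n e^{s\rho(T^i{x^n},T^i{y^n})}$ is multiplicative, combining it with the convolution constraint $\overrightarrow{P}^*_{Y^n|X^n}=\otimes_{i=0}^nP^*_{Y_i|Y^{i-1},X^i}$ and the chain rule $P^*_{Y^n}=\otimes_{i=0}^n P^*_{Y_i|Y^{i-1}}$ forces the joint normalization to split into the per-stage factors $\int_{{\cal Y}_i}e^{s\rho(T^i{x^n},T^i{y^n})}P^*_{Y_i|Y^{i-1}}(dy_i|y^{i-1})$ appearing in (\ref{ex14}); verifying this factorization inductively over $i=0,1,\ldots,n$, and checking that the causal marginals $P^*_{Y_i|Y^{i-1}}$ generated by the joint minimizer are mutually consistent under Assumption~\ref{stationarity}, is the technical crux and the step I expect to be the main obstacle. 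Part~(2) then follows by substituting (\ref{ex14}) into ${\mathbb I}_{X^n\rightarrow{Y^n}}$: writing $\log\big(\overrightarrow{P}^*_{Y^n|X^n}/P^*_{Y^n}\big)=s\sum_{i=0}^n\rho(T^i{x^n},T^i{y^n})-\sum_{i=0}^n\log Z_i$ and integrating, the distortion term collapses to $(n+1)sD$ once the constraint is active while the normalizations produce the logarithmic sum of (\ref{ex15}). The sign $s\le0$, and strict negativity $s<0$ together with the active constraint (\ref{eq.7}) when ${R}^{na}_{0,n}(D)>0$, follow from complementary slackness combined with the convexity and monotonicity of $D\mapsto{R}^{na}_{0,n}(D)$.
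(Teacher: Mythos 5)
Your overall skeleton (convexity plus existence to justify the Lagrangian reformulation, Theorem~\ref{th5} for the first-order condition, complementary slackness for $s<0$ and the active constraint) matches the procedure the paper points to, namely \cite[Theorems IV.3--IV.4]{charalambous-stavrou-ahmed2013}. However, there is a genuine flaw at the center of your variational step. From the inequality $\delta{\mathbb I}-s\,\delta\ell\geq 0$ you conclude that the bracketed integrand must be constant in $y^n$ because ``the variation ranges over all conditional distributions.'' It does not: the admissible set is ${\cal Q}^{c}({\cal Y}_{0,n};{\cal X}_{0,n})$, the class of causal convolutions $\otimes_{i=0}^n P_{Y_i|Y^{i-1},X^i}$ in which the stage-$i$ kernel may depend on $x^i$ only; this is a strict (though convex) subset of ${\cal Q}({\cal Y}_{0,n};{\cal X}_{0,n})$, so the directions $\overrightarrow{P}_{Y^n|X^n}-\overrightarrow{P}^*_{Y^n|X^n}$ do not span enough to force pointwise constancy of the integrand. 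Indeed, the Gibbs kernel your argument produces, $\overrightarrow{P}^*_{Y^n|X^n}(dy^n|x^n)\propto e^{s\sum_{i}\rho(T^ix^n,T^iy^n)}P^*_{Y^n}(dy^n)$, is the \emph{classical} (noncausal) RDF minimizer: its normalization $\int_{{\cal Y}_{0,n}}e^{s\sum_i\rho}P^*_{Y^n}(dy^n)$ depends on the entire $x^n$, so already the induced kernel for $Y_0$ anticipates future source symbols and the kernel generally lies outside ${\cal Q}^c$. Consequently the factorization you flag as ``the technical crux'' is not a gap to be filled in by induction; it is false in general, because the product of the per-stage normalizations $\prod_{i=0}^n\int_{{\cal Y}_i}e^{s\rho(T^ix^n,T^iy^n)}P^*_{Y_i|Y^{i-1}}(dy_i|y^{i-1})$, each a function of $(y^{i-1},x^i)$, does not equal the joint normalization, and (\ref{ex14}) is a genuinely different, causally tilted object from the classical solution (this is precisely why ${R}^{na}_{0,n}(D)$ differs from the classical RDF).

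The repair, and the route of the proof the paper omits, is to perform the variational calculus \emph{inside} the causal class from the start: decompose a direction telescopically, $\otimes_{i=0}^nP_i-\otimes_{i=0}^nP_i^*=\sum_{j=0}^n\big(\otimes_{i<j}P_i^*\big)\otimes\big(P_j-P_j^*\big)\otimes\big(\otimes_{i>j}P_i\big)$, or, under Assumption~\ref{stationarity}, vary the single common stationary kernel in the one direction set up just before Theorem~\ref{th5}. The stationarity condition then holds stage by stage, conditionally on $(y^{i-1},x^i)$, and this is what yields directly the per-stage exponential tilting of $P^*_{Y_i|Y^{i-1}}$ with the conditional normalization appearing in (\ref{ex14}). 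Once (\ref{ex14}) is in hand, your Part~(2) computation (substituting the log-ratio and invoking the active constraint to collapse the distortion term to $(n+1)sD$) and the complementary-slackness argument for $s<0$ with (\ref{eq.7}) go through essentially as you wrote them. Note also that Assumption~\ref{stationarity} is doing real work in this reduction and should appear among your hypotheses, as it does implicitly in the theorem via the shift-invariant form of $d_{0,n}$.
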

\begin{proof}
The proof is similar to \cite[Theorem IV.4]{charalambous-stavrou-ahmed2013}, hence it is omitted.\qed
\end{proof}

\begin{remark}
Note that if the distortion function satisfies $\rho(T^i{x^n},T^i{y^n})=\rho(x_i,T^i{y^n})$ then
\begin{align*}
{P}^*_{Y_i|Y^{i-1},X^i}(dy_i|y^{i-1},x^i)={P}^*_{Y_i|Y^{i-1},X^i}(dy_i|y^{i-1},x_i)-a.s.,~i=0,1,\ldots,n.
\end{align*}
That is, the reproduction kernel is Markov in $X^n$. However, without further restrictions one cannot claim that this conditional distribution is also Markov with respect to $\{Y_i:~i=0,1,\ldots,n\}$.
\end{remark}
Note that unlike \cite{charalambous-stavrou-ahmed2013}, we have derived the main results using the topology of weak convergence of probability measures.

\section{Realization Of Nonanticipative Rate Distortion Function and Example}\label{realization1}

\par In this section, we first describe the construction of the filter using the optimal solution of the nonanticipative RDF and then we present the multidimensional partially observable Gaussian system, which is realizable over a scalar additive Gaussian noise channel.

\subsection{Realization of the Nonanticipative RDF}
The realization of the nonanticipative RDF (optimal reproduction conditional distribution) is equivalent to the sensor mapping as shown in Fig.~\ref{filtering_and_causal}, which produces the auxiliary random process $\{Z_i:~i\in\mathbb{N}\}$ that will be used for filtering. This is equivalent to identifying a communication channel, an encoder and a decoder such that the reproduction from the sequence $X^n$ to the sequence $Y^n$ matches the nonanticipative rate distortion minimizing reproduction kernel. Fig.~\ref{realization_figure} illustrates the cascade  sub-systems that realize the nonanticipative RDF, which is consistent with the discussion in the introduction.
\begin{definition}$(${\bf Realization}$)$\label{realization}
Given a source $\{P_{X_i|X^{i-1}}(dx_i|x^{i-1}):i=0,\ldots,n\}$,  a channel $\{P_{B_i|B^{i-1},A^{i}}(db_i|b^{i-1},a^i):i=0,\ldots,n\}$ is a realization of the optimal reproduction distribution if there exists a pre-channel encoder $\{P_{A_i|A^{i-1},B^{i-1},X^i}$ $(da_i|a^{i-1},b^{i-1},x^i):i=0,\ldots,n\}$ and a post-channel decoder $\{P_{Y_i|Y^{i-1},B^i}(dy_i|$ $y^{i-1},b^i):i=0,\ldots,n\}$ such that
\begin{align}
 {\overrightarrow {P}}_{Y^{n}|X^{n}}^*(dy^n|x^n)\tri\otimes_{i=0}^n P^*_{Y_i|Y^{i-1},X^i}(dy_i|y^{i-1},x^i)\nonumber
\end{align}
where the joint distribution is
\begin{align}
&P_{X^n, A^n, B^n, Y^n}(dx^n,da^n,db^n,dy^n) \nonumber \\
&=\otimes_{i=0}^n P_{Y_i|Y^{i-1},B^i}(dy_i|y^{i-1},b^i)  \otimes P_{B_i|B^{i-1},A^{i}}
(db_i|b^{i-1},a^i) \nonumber \\
&~~~\otimes P_{A_i|A^{i-1},B^{i-1},X^i}(da_i|a^{i-1},b^{i-1},x^i)\otimes P_{X_i|X^{i-1}}(dx_i|x^{i-1}) \nonumber
\end{align}
The filter is given by $\{P_{X_i|B^{i-1}}(dx_i|b^{i-1}):i=0,\ldots,n\}$ or by $\{P_{X_i|Y^{i-1}}(dx_i|y^{i-1}):i=0,\ldots,n\}$.
\end{definition}
\begin{figure}[H]
\centering
\includegraphics[scale=0.75]{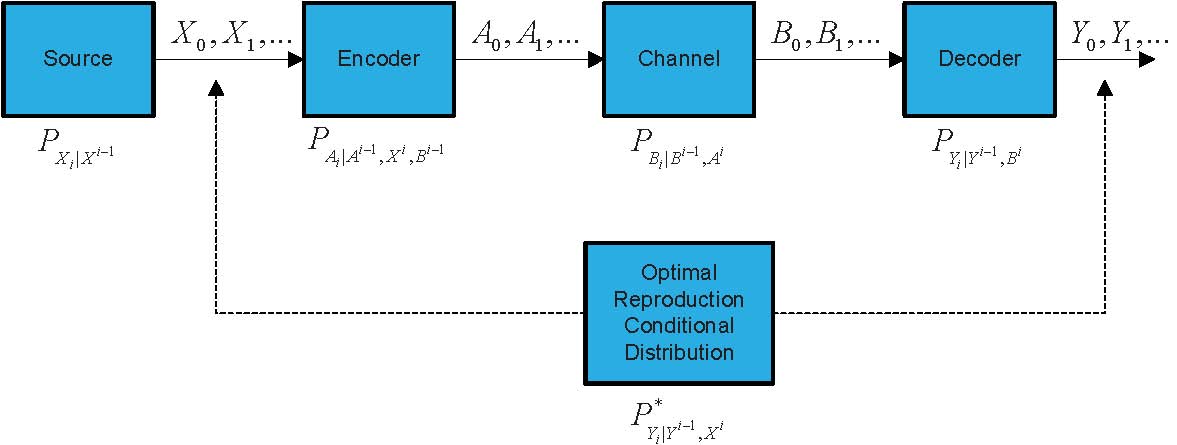}
\caption{Block Diagram of Realizable Nonanticipative Rate Distortion Function}
\label{realization_figure}
\end{figure}
\noi Thus, if $\{P_{B_i|B^{i-1},A^{i}}(db_i|b^{i-1},a^i):i=0,\ldots,n\}$ is a realization of the nonanticipative RDF minimizing distribution  then the channel connecting the source, encoder, channel, decoder achieves the nonanticipative RDF, and the filter is obtained. Clearly, $\{B_i:~i=0,1,\ldots,n\}$ is an auxiliary random process which is needed to obtain the filter $\{P_{X_i|B^{i-1}}(dx_i|b^{i-1}):i=0,\ldots,n\}$.\\
Note that if we also impose the requirement that the channel capacity is equal to the $\lim_{n\longrightarrow\infty}\frac{1}{n+1}R^{na}_{0,n}(D)$, then the realization procedure described in Definition~\ref{realization}, is equal to the joint source-channel matching \cite{gastpar2003} for sources with memory without anticipation. Next, we present an example where the optimal communication via symbol-by-symbol or uncoded transmission is established.


\subsection{Example: Mutlidimensional Gaussian Processes over a Scalar AWGN Channel}\label{example}

\par In this section, we present an example to illustrate the realization procedure described in Section~\ref{realization1}. We consider multidimensional Gaussian-Markov sources transmitted optimally over a scalar additive Gaussian channel.  Hence, this example is distinguished from a companion example described in \cite[Section VI]{charalambous-stavrou-ahmed2013}  where the multidimensional Gaussian-Markov source is transmitted optimally over a vector additive Gaussian channel.\\
Consider the following discrete-time partially observed linear Gauss-Markov system described by
\begin{eqnarray}
\left\{ \begin{array}{ll} X_{t+1}=AX_t+BW_t,~X_0=X\in\mathbb{R}^n,~t\in\mathbb{N}\\
Y_t=CX_t+GV_t,~t\in\mathbb{N} \end{array} \right.\label{equation51}
\end{eqnarray}
\noi where $X_t\in\mathbb{R}^m$ is the state (unobserved) process of information source (plant), and $Y_t\in\mathbb{R}^p$ is the partially measurement (observed) process. Assume that ($C,A$) is detectable and ($A,\sqrt{BB^{tr}}$) is stabilizable, ($G\neq0$). The state and observation noises $\{(W_t,V_t):t\in\mathbb{N}\}$, $W_t\in\mathbb{R}^k$ and $V_t\in\mathbb{R}^p$, are Gaussian IID processes with zero mean and identity covariances are mutually independent, and independent of the Gaussian RV $X_0$, with parameters $N(\bar{x}_0,\bar{V}_0)$.

\noi The objective is to reconstruct $\{Y_t:~t\in\mathbb{N}\}$ from $\{\tilde{Y}_t:~t\in\mathbb{N}\}$ causally. The distortion is single letter defined by 
\begin{eqnarray*}
d_{0,n}(y^n,\tilde{y}^n)\tri\frac{1}{n+1}\sum_{t=0}^n||y_t-\tilde{y}_t||_{\mathbb{R}^p}^2.
\end{eqnarray*}
The objective is to compute 
\begin{eqnarray*}
R_{0,n}^{na}(D)=\inf_{\overrightarrow{P}_{\tilde{Y}^n|Y^n}\in{\cal Q}^c_{0,n}(D)}\frac{1}{n+1}\mathbb{I}_{Y^n\rightarrow\tilde{Y}^n}(P_{Y^n},\overrightarrow{P}_{\tilde{Y}^n|Y^n})
\end{eqnarray*}
and then realize the reproduction distribution. According to Theorem~\ref{th6}, the optimal reproduction is given by 
\begin{eqnarray}
\overrightarrow{P}^*_{\tilde{Y}^n|Y^n}(d\tilde{y}^n|y^n)=\otimes_{t=0}^n\frac{e^{s||\tilde{y}_t-y_t||_{\mathbb{R}^p}^2}P_{\tilde{Y}_t|\tilde{Y}^{t-1}}(d\tilde{y}_t|\tilde{y}^{t-1})}{\int_{\tilde{\cal Y}_t}e^{s||\tilde{y}_t-y_t||_{\mathbb{R}^p}^2}P_{\tilde{Y}_t|\tilde{Y}^{t-1}}(d\tilde{y}_t|\tilde{y}^{t-1})},~s\leq{0}.\label{eq.9}
\end{eqnarray}
Hence, from (\ref{eq.9}) it follows that $P_{\tilde{Y}_t|\tilde{Y}^{t-1},Y^t}=P_{\tilde{Y}_t|\tilde{Y}^{t-1},Y_t}(d\tilde{y}_t|\tilde{y}^{t-1},y_t)-a.a.~(\tilde{y}^{t-1},y_t)$, that is the reproduction is Markov with respect to the process $\{Y_t:~t\in\mathbb{N}\}$. Moreover, since the exponential term $||\tilde{y}_t-y_t||_{\mathbb{R}^p}^2$ in the right hand side of (\ref{eq.9}) is quadratic in $(y_t,\tilde{y}_t)$, and $\{X_t:~t\in\mathbb{N}\}$ is Gaussian, then $\{(X_t,{Y}_t):~t\in\mathbb{N}\}$ is jointly Gaussian, hence it follows that $P_{\tilde{Y}_t|\tilde{Y}^{t-1},Y_t}(\cdot|\tilde{y}^{t-1},y_t)$ is Gaussian (for a fixed realization of $(\tilde{y}^{t-1},y_t)$). Hence, it has the general form
\begin{eqnarray}
\tilde{Y}_t=\bar{A}_tY_t+\bar{B}_t\tilde{Y}^{t-1}+\bar{Z}_t,~t\in\mathbb{N}\label{eq.10}
\end{eqnarray}
where $\bar{A}_t\in\mathbb{R}^{p\times{p}}$, $\bar{B}_t\in\mathbb{R}^{p\times{t}p}$, and $\{Z_t:~t\in\mathbb{N}\}$ is an independent sequence of Gaussian vectors.\\
Next, we chose to realize (\ref{eq.10}) over a scalar additive Gaussian noise channel with feedback defined by 
\begin{eqnarray}
B_t=A_t+Z_t,~t\in\mathbb{N}\label{eq.11}
\end{eqnarray}
where the encoder is a mapping $A_t=\Phi_t(Y_t,\tilde{Y}^{t-1})$ with power $P_t\tri\mathbb{E}\{(A_t)^2\}$ as shown in Fig.~\ref{discrete_time_communication_system}. 
\begin{figure}[ht]
\centering
\includegraphics[scale=0.75]{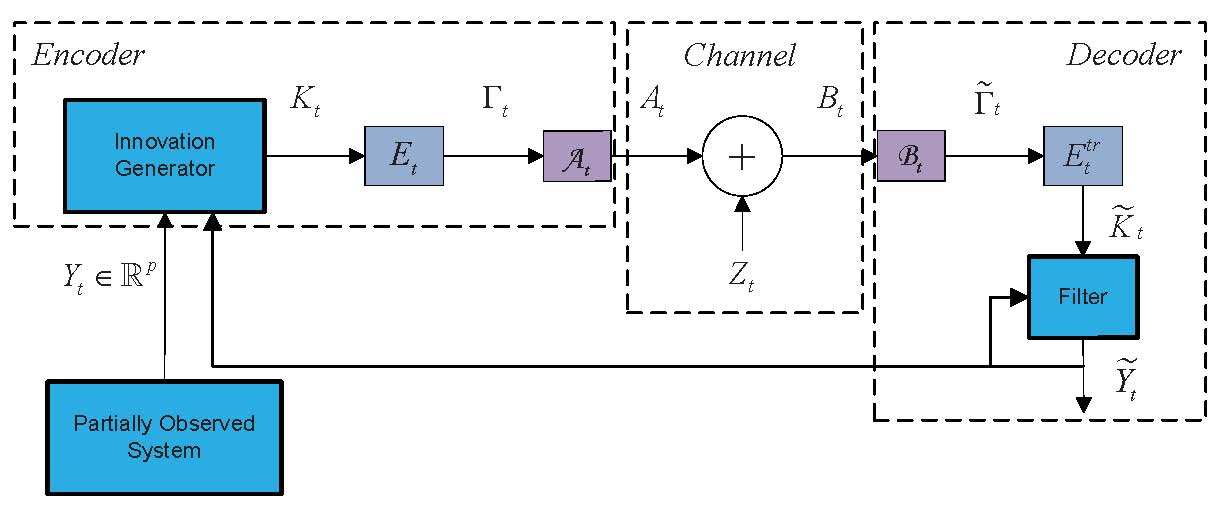}
\caption{Design of Realizable Nonanticipative Rate Distortion Function}
\label{discrete_time_communication_system}
\end{figure}
\noi Recall that for $A_t$ Gaussian, the directed information is $I(A^t\rightarrow{B}^t)=\log\big(1+\mathbb{E}\{(A_t)^2\}Var(Z_t)^{-1}\big)$. The decoder at time $t\in\mathbb{N}$ receives $B^t$ and computes the reproduction $\tilde{Y}_t=\Psi_t(B^t,\tilde{Y}^{t-1})$.\\
{\it Calculation of Nonanticipative RDF.} First, we compute the Gaussian innovation process $\{K_t:~t\in\mathbb{N}\}$, defined by 
\begin{eqnarray}
K_t\tri{Y}_t-\mathbb{E}\Big{\{}Y_t|\sigma\{\tilde{Y}^{t-1}\}\Big{\}},~t\in\mathbb{N}\label{equation52}
\end{eqnarray}
whose covariance is defined by $\Lambda_t\tri{E}\{K_tK_t^{tr}\}$. The decoder consists of a pre-decoder $\{\tilde{K}_t:~t\in\mathbb{N}\}$ which is defined by 
\begin{eqnarray}
\tilde{K}_t\tri\tilde{Y}_t-\mathbb{E}\Big{\{}Y_t|\sigma\{\tilde{Y}^{t-1}\}\Big{\}},~t\in\mathbb{N}.\label{eq.12}
\end{eqnarray}
Note that the fidelity criterion satisfies $d_{0,n}(Y^n,\tilde{Y}^n)=d_{0,n}(K^n,\tilde{K}^n)=\frac{1}{n+1}\sum_{t=0}^n||\tilde{K}_t-K_t||_{\mathbb{R}^p}^2=\frac{1}{n+1}\sum_{t=0}^n||\tilde{\Gamma}_t-\Gamma_t||_{\mathbb{R}^p}^2$. Let $\{E_t:~t\in\mathbb{N}\}$ be the unitary matrix that diagonalizes $\{\Lambda_t:~t\in\mathbb{N}\}$, such that
\begin{eqnarray}
E_t\Lambda_t{E}_t^{tr}=diag\{\lambda_{t,1},\ldots\lambda_{t,p}\},~t\in\mathbb{N}.\label{equation53}
\end{eqnarray}
\noi Define $\Gamma_t\tri{E}_tK_t$. Then $\{\Gamma_t:~t\in\mathbb{N}\}$ has independent components. Let $\{\tilde{\Gamma}_t:~t\in\mathbb{N}\}$, where $\tilde{\Gamma}=E_t\tilde{K}_t$ denote its reproduction and define $d_{0,n}(\Gamma^n,\tilde{\Gamma}^n)\tri\frac{1}{n+1}\sum_{t=0}^n||\Gamma_t-\tilde{\Gamma}_t||_{\mathbb{R}^p}^2$. Then by \cite{blahut1987} (invoking an upper and Shannon's lower bound if necessary) we can have,
\begin{align}
R^{na}(D)&=\lim_{n\longrightarrow\infty}R_{0,n}^{na}(D)\tri\lim_{n\longrightarrow\infty}\inf_{\substack{\overrightarrow{P}_{\tilde{Y}^n|Y^n}:\\~\mathbb{E}\big{\{}d_{0,n}(Y^n,\tilde{Y}^n)\leq{D}\big{\}}}}\frac{1}{n+1}\mathbb{I}_{Y^n\rightarrow\tilde{Y}^n}(P_{Y^n},\overrightarrow{P}_{\tilde{Y}^n|Y^n})\nonumber\\
&=\lim_{n\longrightarrow\infty}R_{0,n}^{na,K^n,\tilde{K}^n}(D)\tri\lim_{n\longrightarrow\infty}\inf_{\substack{\overrightarrow{P}_{\tilde{K}^n|K^n}:\\~\mathbb{E}\big{\{}d_{0,n}(K^n,\tilde{K}^n)\leq{D}\big{\}}}}\frac{1}{n+1}\mathbb{I}(P_{K^n},\overrightarrow{P}_{\tilde{K}^n|K^n})\nonumber\\
&=\lim_{n\longrightarrow\infty}R_{0,n}^{na,\Gamma^n,\tilde{\Gamma}^n}(D)\tri\lim_{n\longrightarrow\infty}\inf_{\substack{\overrightarrow{P}_{\tilde{\Gamma}^n|\Gamma^n}:\\~\mathbb{E}\big{\{}d_{0,n}(\Gamma^n,\tilde{\Gamma}^n)\leq{D}\big{\}}}}\frac{1}{n+1}\mathbb{I}(P_{\Gamma^n},\overrightarrow{P}_{\tilde{\Gamma}^n|\Gamma^n})\\
&=\lim_{n\longrightarrow\infty}\frac{1}{2}\frac{1}{n+1}\sum_{t=0}^n\sum_{i=1}^p\log\Big{(}\frac{\lambda_{t,i}}{\delta_{t,i}}\Big{)}=\frac{1}{2}\sum_{i=1}^p\log\Big{(}\frac{\lambda_{\infty,i}}{\delta_{\infty,i}}\Big{)}\nonumber
\end{align}
where $\lambda_{\infty,i}\tri\lim_{t\longrightarrow\infty}\lambda_{t,i}$, $\delta_{\infty,i}\tri\lim_{t\longrightarrow\infty}\delta_{t,i}$, $\xi_{\infty}\tri\lim_{t\longrightarrow\infty}\xi_t$ and 
\begin{eqnarray}
\delta_{t,i} \tri \left\{ \begin{array}{ll} \xi_t & \mbox{if} \quad \xi_t\leq\lambda_{t,i} \\
\lambda_{t,i} &  \mbox{if}\quad\xi_t>\lambda_{t,i} \end{array} \right.,~t\in\mathbb{N},~i=1,\ldots,p\nonumber
\end{eqnarray}
and $\{\xi_t:~t\in\mathbb{N}\}$ satisfies $\sum_{i=1}^p\delta_{t,i}=D$.\\
Define 
\begin{align*} 
H_{\infty}&\tri\lim_{t\longrightarrow\infty}{H}_t,{H}_t\tri{diag}\{\eta_{t,1},\ldots,\eta_{t,p}\}\in\mathbb{R}^{p\times{p}}, 
\eta_{t,i}\tri{1}-\frac{\delta_{t,i}}{\lambda_{t,i}},~t\in\mathbb{N},~i=1,\ldots,p\\
\Delta_{\infty}&\tri\lim_{t\longrightarrow\infty}\Delta_t, \Delta_t\tri{diag}\{\delta_{t,1},\ldots,\delta_{t,p}\},~t\in\mathbb{N},~i=1,\ldots,p.
\end{align*}
The reproduction conditional distributions is given by
\begin{eqnarray}
\overrightarrow{P}^*_{\tilde{\Gamma}^n|\Gamma^n}(d\tilde{\gamma}^n|{\gamma}^n)=\otimes_{t=0}^n{P}^*_{\Gamma_t|\tilde{\Gamma}_t}(d\tilde{\gamma}_t|\gamma_t),~~{P}^*_{\Gamma_t|\tilde{\Gamma}_t}(\cdot|\cdot)\sim{N}(H_t\Gamma_t,H_t\Delta_t).\nonumber
\end{eqnarray}
\noi Hence, from Fig.~\ref{discrete_time_communication_system} the reproduction is obtained from
\begin{align*}
\tilde{Y}_t&=\tilde{K}_t+\mathbb{E}\big\{Y_t|\sigma\{\tilde{Y}^{t-1}\}\big\}=E_t^{tr}\tilde{\Gamma}_t+\mathbb{E}\big\{Y_t|\sigma\{\tilde{Y}^{t-1}\}\big\}\\
&=E_t^{tr}H_tE_tK_t+E_t^{tr}\sqrt{H_t\Delta_t}Z_t+\mathbb{E}\big\{Y_t|\sigma\{\tilde{Y}^{t-1}\}\\
&=E_t^{tr}H_tE_tC\big(X_t-\mathbb{E}\big\{X_t|\sigma\{\tilde{Y}^{t-1}\}\big\}\big)+E_t^{tr}H_tE_tDV_t+E_t^{tr}\sqrt{H_t\Delta_t}Z_t+\mathbb{E}\big\{Y_t|\sigma\{\tilde{Y}^{t-1}\}\big\}.
\end{align*}
The received signal is decompressed by $\tilde{\Gamma}_t={\cal B}_tB_t$ at the pre-decoder. By the knowledge of the decoder output $\tilde{Y}^{t-1}$, the mean square estimator $\hat{X}_t$ is generated at the decoder (and encoder because $\widehat{X}_{t|t-1}\tri\mathbb{E}\big{\{}X_t|\sigma\{\tilde{Y}_{t-1}\}\big{\}}$). Next we pick a specific AWGN channel and we show how to realize the reproduction distribution, see Fig.~\ref{discrete_time_communication_system}.\\
\noi{\it Realization over a Scalar AWGN Channel.} Consider a scalar channel $B_t=A_t+Z_t,~t\in\mathbb{N}$, where $Z_t$ is Gaussian zero mean, $Q\tri{Var}(Z_t)$, and $A_t\in\mathbb{R}$. Since by data processing inequality $I(X^n\rightarrow{Y^n})\geq{I}(A^n\rightarrow{Y^n})$, then we should compress the information signal $\{\Gamma_t:t\in\mathbb{N}\}$ before we send it over the AWGN channel. Thus, we define
\begin{align*}
A_t={\cal A}_t\Gamma_t={\cal A}_tE_tK_t,~{\cal A}_t\in\mathbb{R}^{1\times{p}}.
\end{align*}
Since the channel capacity and nonanticipative RDF must be equal we set
\begin{align*}
C(P)\tri\lim_{n\longrightarrow\infty}C_{0,n}(P_0,\ldots,P_n)&=\lim_{n\longrightarrow\infty}\frac{1}{n+1}I(A^n\rightarrow{B}^n)\\
&=\lim_{n\longrightarrow\infty}\frac{1}{2}\frac{1}{n+1}\sum_{t=0}^n\log(1+\mathbb{E}\{A_t\}^2Var(Z_t)^{-1})\\
&=\lim_{n\longrightarrow\infty}\frac{1}{2}\frac{1}{n+1}\sum_{t=0}^n\log(1+\frac{P_t}{Q})\\
&=\lim_{n\longrightarrow\infty}\frac{1}{2}\frac{1}{n+1}\sum_{t=0}^n\log\frac{|\Lambda_t|}{|\Delta_t|}\tri\frac{1}{2}\log\frac{|\Lambda_\infty|}{|\Delta_\infty|}.
\end{align*}
We can design $\{({\cal A}_t,{\cal B}_t):~t\in\mathbb{N}\}$, by introducing the nonnegative components $\alpha_1,\ldots,\alpha_p$, $\sum_{i=1}^p{\alpha}_i=1$,~$i=2,\ldots,p$, and by considering the following transformations
\begin{equation}
{\cal A}_t=\Big{[}\sqrt{\frac{\alpha_1{P}_t}{\lambda_{t,1}}},\ldots,\sqrt{\frac{\alpha_p{P}_t}{\lambda_{t,p}}}\Big{]},~
{\cal B}_t=\Big{[}\sqrt{\alpha_1{P}_t\lambda_{t,1}},\ldots,\sqrt{\alpha_p{P}_t\lambda_{t,p}}\Big{]}^{tr},~t\in\mathbb{N}.\label{equation54}
\end{equation}
Define
\begin{align}
H_t&={\cal B}_t{\cal A}_t=\Big{[}\sqrt{\alpha_1{P}_t\lambda_{t,1}},\ldots,\sqrt{\alpha_p{P}_t\lambda_{t,p}}\Big{]}^{tr}\Big{[}\sqrt{\frac{\alpha_1{P}_t}{\lambda_{t,1}}},\ldots,\sqrt{\frac{\alpha_p{P}_t}{\lambda_{t,p}}}\Big{]}\nonumber\\
&= \left[ \begin{array}{c}
\sqrt{\alpha_1{P}_t\lambda_{t,1}} \\
\ldots\\
\sqrt{\alpha_p{P}_t\lambda_{t,p}} \end{array} \right]\Big{[}\sqrt{\frac{\alpha_1{P}_t}{\lambda_{t,1}}},\ldots,\sqrt{\frac{\alpha_p{P}_t}{\lambda_{t,p}}}\Big{]}\nonumber\\
&= P_t\left[ \begin{array}{ccc}
\alpha_1 & \ldots  \sqrt{\alpha_1\alpha_p\frac{\lambda_{t,1}}{\lambda_{t,p}}} \\
\vdots &   \vdots\\
\sqrt{\alpha_p\alpha_1 \frac{\lambda_{t,p}}{\lambda_{t,1}}} & \ldots  \alpha_p\end{array} \right]\in\mathbb{R}^{p\times{p}}.\nonumber
\end{align}
Therefore,
\begin{equation}
\tilde{\Gamma}_t={\cal B}_t{\cal A}_tE_tK_t+{\cal B}_tZ_t,~\Gamma_t=E_tK_t,~t\in\mathbb{N}.\label{equation56}
\end{equation}
By pre-multiplying $\tilde{\Gamma}_t$ by $E_t^{tr}$ we can construct
\begin{eqnarray*}
\tilde{K}_t&=&E_t^{tr}\tilde{\Gamma}_t\\
&=&E_t^{tr}{\cal B}_t{\cal A}_tE_tK_t+E_t^{tr}{\cal B}_tZ_t,~t\in\mathbb{N}.
\end{eqnarray*}
The reproduction of $Y_t$ is given by the sum of $\tilde{K}_t$ and $C\widehat{X}_{t|t-1}$ as follows.
\begin{eqnarray}
\tilde{Y}_t&=&\Psi_t(B^t,\tilde{Y}^{t-1})\nonumber\\
&=&\tilde{K}_t+C\widehat{X}_{t|t-1},~\hat{X}_t=\mathbb{E}\Big{\{}X_t|\sigma\{\tilde{Y}^{t-1}\}\Big{\}}\label{eq.13}\\
&=&E_t^{tr}{\cal B}_t{\cal A}_tE_tK_t+E_t^{tr}{\cal B}_tZ_t+C\widehat{X}_{t|t-1},~t\in\mathbb{N}.\label{eq.14}
\end{eqnarray}
Next, it will be shown that the desired distortion is achieved by the above realization while the filter of $\{Y_t:~t\in\mathbb{N}\}$ is based on $\{\tilde{Y}_t:~t\in\mathbb{N}\}$ given by (\ref{eq.14}).\\
First, we notice that
\begin{eqnarray}
\lim_{n\longrightarrow\infty}\mathbb{E}\Big{\{}(Y_t-\tilde{Y}_t)^{tr}(Y_t-\tilde{Y}_t)\Big{\}}=\lim_{n\longrightarrow\infty}Trace\Big{(}\mathbb{E}\Big{\{}(Y_t-\tilde{Y}_t)(Y_t-\tilde{Y}_t)^{tr}\Big{\}}\Big{)}\nonumber
\end{eqnarray}
Then we can compute 
\begin{align}
&\mathbb{E}\Big{\{}(Y_t-\tilde{Y}_t)^{tr}(Y_t-\tilde{Y}_t)\Big{\}}=Trace\mathbb{E}\Big{\{}(K_t-\tilde{K}_t)(K_t-\tilde{K}_t)^{tr}\Big{\}}\nonumber\\
&=Trace\mathbb{E}\Big{\{}(K_t-E_t^{tr}\tilde{\Gamma}_t)(K_t-E_t^{tr}\tilde{\Gamma}_t)^{tr}\Big{\}}\nonumber\\
&=Trace\mathbb{E}\Big{\{}(K_t-E_t^{tr}{\cal B}_t{\cal A}_tE_tK_t-E_t^{tr}{\cal B}_tZ_t)(K_t-E_t^{tr}{\cal B}_t{\cal A}_tE_tK_t-E_t^{tr}{\cal B}_tZ_t)^{tr}\Big{\}}\nonumber\\
&=Trace\mathbb{E}\Big{\{}\big{(}(I-E_t^{tr}{\cal B}_t{\cal A}_tE_t)K_t-E_t^{tr}{\cal B}_tZ_t\big{)}\big{(}(I-E_t^{tr}{\cal B}_t{\cal A}_tE_t)K_t-E_t^{tr}{\cal B}_tZ_t\big{)}^{tr}\Big{\}}\nonumber\\
&=Trace\Big{\{}(I-E_t^{tr}{\cal B}_t{\cal A}_tE_t)\Lambda_t(I-E_t^{tr}{\cal B}_t{\cal A}_tE_t)^{tr}+E_t^{tr}{\cal B}_tQ{\cal B}_t^{tr}E_t\Big{\}}\nonumber\\
&=Trace\Big{\{}(I-E_t^{tr}{\cal B}_t{\cal A}_tE_t)E_t^{tr}diag(\lambda_{t,1},\ldots,\lambda_{t,p})E_t(I-E_t^{tr}{\cal B}_t{\cal A}_tE_t)^{tr}+E_t^{tr}{\cal B}_tQ{\cal B}_t^{tr}E_t\Big{\}}\nonumber\\
&=Trace\Big{\{}E_t^{tr}\Big{(}(I-{\cal B}_t{\cal A}_t)diag(\lambda_{t,1},\ldots,\lambda_{t,p})(1-{\cal B}_t{\cal A}_t)^{tr}+({\cal B}_tQ{\cal B}_t^{tr})\Big{)}E_t\Big{\}}\nonumber\\
&=Trace\Big{\{}E_t^{tr}diag(\delta_{t,1},\ldots,\delta_{t,p})E_t\Big{\}}=\sum_{i=1}^p\delta_{t,i}=D.\label{equation57}
\end{align}
Hence, $\lim_{n\longrightarrow\infty}\frac{1}{n+1}\sum_{t=0}^n\mathbb{E}\Big{\{}(Y_t-\tilde{Y}_t)^{tr}(Y_t-\tilde{Y}_t)\Big{\}}=\lim_{n\longrightarrow\infty}\frac{1}{n+1}\sum_{t=0}^n\sum_{i=1}^p\delta_{t,i}=\sum_{i=1}^p\delta_{\infty,i}=D$.\\
Thus, by substituting the values of ${\cal B}_t, {\cal A}_t$ in terms of  $\{\delta_{t,i}\}_{i=1}^p, \{\lambda_{t,i}\}_{i=1}^p, P_t, Q$, and taking the limit in (\ref{equation57}) for $P_{\infty}\tri\lim_{t\longrightarrow\infty}P_t$, we get the general equation 
\begin{align}
\sum_{i=1}^p\Big[(1-\alpha_iP_\infty)\lambda_{\infty,i}(1-\alpha_i{P}_\infty)+a_iP_\infty{Q}\lambda_{\infty,i}\Big]=\sum_{i=1}^p\delta_{\infty,i}=D.\label{equation58}
\end{align} 
Therefore, to complete the realization we need to calculate $\{a_\infty\}_{i=1}^p$ in terms of the known eigenvalues $\{\lambda_{\infty,i},\delta_{\infty,i}\}_{i=1}^p$, the constants of the power level $P_\infty$, and channel's noise variance $Q$.  Note that due to the solution of the nonanticipative RDF for multidimensional partially observed Gaussian source, the encoding is performed only when $\frac{\{\lambda_{\infty,i}\}_{i=1}^p}{\{\delta_{\infty,i}\}_{i=1}^p}>1$. For more than one active modes of transmission $\{\lambda_{\infty,i}:~i=1,\ldots,k,~2\leq{k}\leq{p}\}$ then $\sum_{i=1}^k\xi_{\infty}=D\Longrightarrow\xi_{\infty}=\frac{D}{k}$.\\
We demonstrate this for the case where two active modes $\lambda_{\infty,1},~\lambda_{\infty,2}$, transmitted over the scalar channel, i.e., $\sum_{i=1}^2\alpha_i=\alpha_1+\alpha_2=1$, $\alpha_1\geq{0}$, $\alpha_2\geq{0}$.
\noi For $k=2$, (\ref{equation58}) is simplified as:\\
\begin{align}
\sum_{i=1}^2\Big[(1-\alpha_iP_\infty)\lambda_{\infty,i}(1-\alpha_iP_\infty)+a_iP_{\infty}Q\lambda_{\infty,i}\Big]=\sum_{i=1}^2\delta_{\infty,i}=D
\label{equation60}
\end{align}
with the following encoding requirement
\begin{align}
\frac{\lambda_{\infty,1}}{\xi_{\infty}}>1,~\frac{\lambda_{\infty,2}}{\xi_{\infty}}>1\Longrightarrow\frac{\lambda_{\infty,1}+\lambda_{\infty,2}}{D}>1.
\end{align} 
\noi After some calculations, (\ref{equation60}) is simplified in the following second order equation
\begin{align}
\alpha_2^2\Big[(\lambda_{\infty,1}+\lambda_{\infty,2})P_\infty^2\Big]+\alpha_2\Big[P_\infty\big((\lambda_{\infty,1}-\lambda_{\infty,2})(2-Q)-2\lambda_{\infty,1}P_t\big)\Big]\nonumber\\
+(\lambda_{\infty,1}+\lambda_{\infty,2})-D+\lambda_{\infty,1}P_\infty\Big[P_\infty+Q-1\Big]=0.\label{equation61}
\end{align} 
This quadratic equation can be solved numerically simultaneously with the equation of the filter, i.e., see (\ref{11}).\\
\noi{\it Decoder.} The decoder (mean square estimator) is $\tilde{Y}_t=\tilde{K}_t+C\widehat{X}_{t|t-1}$, where $\widehat{X}_{t|t-1}:~t\in\mathbb{N}$ is obtained from the modified Kalman filter as follows. 
Recall that
\begin{eqnarray}
\tilde{Y}_t&=&\tilde{K}_t+C\widehat{X}_{t|t-1}\nonumber\\
&=&E_\infty^{tr}H_\infty{E}_\infty(Y_t-C\widehat{X}_{t|t-1})+E_\infty^{tr}{\cal B}_\infty{Z}_t+C\widehat{X}_{t|t-1}\nonumber\\
&=&E_\infty^{tr}H_{\infty}E_{\infty}(CX_t+GV_t-C\widehat{X}_{t|t-1})+E_\infty^{tr}{\cal B}_{\infty}Z_t+C\widehat{X}_{t|t-1}\nonumber\\
&=&E_\infty^{tr}H_{\infty}E_{\infty}(CX_t-\widehat{X}_{t|t-1})+C\widehat{X}_{t|t-1}+(E_{\infty}^{tr}H_{\infty}E_{\infty}GV_t+E_{\infty}^{tr}{\cal B}_{\infty}Z_t)\nonumber
\end{eqnarray}
where $\{V_t:~t\in\mathbb{N}\}$ and $\{Z_t:~t\in\mathbb{N}\}$ are independent Gaussian vectors. Then $\widehat{X}_{t|t-1}=E\big{\{}X_t|\sigma\{\tilde{Y}^{t-1}\}\big{\}}$ is given by the modified Kalman filter
\begin{align}
\widehat{X}_{t+1|t-1}&=A\widehat{X}_{t|t-1}+A\Sigma_\infty(E_\infty^{tr}H_{\infty}E_{\infty}C)^{tr}M_{\infty}^{-1}\big(\tilde{Y}_t-C\widehat{X}_{t|t-1}\big),~\widehat{X}_0=\bar{x}_0\label{10}\\
\Sigma_{\infty}&=A\Sigma_\infty{A}^{tr}-A\Sigma_{\infty}(E_\infty^{tr}H_\infty{E}_{\infty}C)^{tr}M_{\infty}^{-1}(E_{\infty}^{tr}H_{\infty}E_{\infty}C)\Sigma_{\infty}A+BB_{\infty}^{tr}\label{11}
\end{align}
where 
\begin{align}
M_\infty&=E_\infty^{tr}H_\infty{E}_{\infty}C\Sigma_{\infty}(E_{\infty}^{tr}H_{\infty}E_{\infty}C)^{tr}+E_{\infty}^{tr}H_{\infty}E_{\infty}GG^{tr}(E_{\infty}^{tr}H_{\infty}E_{\infty})^{tr}\nonumber\\
&+E_{\infty}^{tr}{\cal B}_{\infty}Q{\cal B}_{\infty}^{tr}E_t^{tr}\nonumber
\end{align}
and $E_{\infty}$ is the unitary matrix that diagonalizes $\Lambda_{\infty}$ by
\begin{eqnarray}
E_{\infty}\Lambda_{\infty}E_{\infty}^{tr}=diag(\lambda_{\infty,1},\ldots,\lambda_{t,p}).\nonumber
\end{eqnarray}
\noi Finally, the matching of the source to the channel is obtained as follows. First, we remind that the power constraint satisfies $\mathbb{E}\{(A_t)^2\}={P}_t$,~$\lim_{t\rightarrow\infty}P_t=P_{\infty}\equiv{P}$.
\begin{align}
R^{na}(D)&=\lim_{n\rightarrow\infty}\inf_{\substack{P_{\tilde{Y}^n|Y^n}(d\tilde{y}^n|{y}^n)\\ \in{\cal Q}^c_{0,n}(D)}}\frac{1}{n+1}\mathbb{I}_{X^n\rightarrow{Y^n}}(P_{Y^n},\overrightarrow{P}_{\tilde{Y}^n|Y^n})\nonumber\\
&=\lim_{n\rightarrow\infty}\frac{1}{2}\frac{1}{n+1}\sum_{t=0}^n\sum_{i=1}^p\log\Big(\frac{\lambda_{t,i}}{\delta_{t,i}}\Big)\nonumber\\
&=\frac{1}{2}\sum_{i=1}^p\log\Big(\frac{\lambda_{\infty,i}}{\delta_{\infty,i}}\Big)\nonumber\\
&=\frac{1}{2}\log\frac{|\Lambda_{\infty}|}{|\Delta_{\infty}|}=\frac{1}{2}\log(1+\frac{P}{Q})=C(P).\nonumber
\end{align}

Thus, for a given $(D,P)$, $C(P)=R^{na}(D)$ is the minimum capacity under which there exists a realizable filter for the data reproduction of $\{Y_t:~t\in\mathbb{N}\}$ by $\{\tilde{Y}_t:~t\in\mathbb{N}\}$ ensuring an average distortion equal to $D$. The filter of $\{X_i:~i\in\mathbb{N}\}$ or $\{Y_i:~i\in\mathbb{N}\}$ is obtained for $\{\tilde{Y}_i:~i\in\mathbb{N}\}$ given by (\ref{equation56}) or the auxiliary data $B_i=A_i(Y_i,\tilde{Y}^{i-1})+Z_i$,~$i\in\mathbb{N}$. Finally, the filter is the steady state version of (\ref{10}), (\ref{11}).

\section{Conclusion}

In this paper, the solution of the nonanticipative RDF is obtained on abstract spaces using the topology of weak convergence of probability measures and a special case of directed information. A specific example that realizes the optimal causal filter is presented and the connection between nonanticipative RDF and source-channel matching via uncoded or symbol-by-symbol transmission is derived.

\bibliographystyle{model1-num-names}
\bibliography{photis_filtering_weak,photis_filtering_weakstar}

\end{document}